\newcommand\bigcheck[1]{#1 \raise1ex\hbox{$\hspace{-1ex}{}^\vee$}}
\newcommand\sucheck[1]{#1 \raise0.5ex\hbox{$\hspace{-1ex}{}^\vee$}}
\newtheorem{theorem}{Theorem}[section]
\newtheorem{lemma}[theorem]{Lemma}
\newtheorem{corollary}[theorem]{Corollary}
\newtheorem{proposition}[theorem]{Proposition}
\newtheorem*{lemma*}{Lemma}
\theoremstyle{definition}
\newtheorem{definition}[theorem]{Definition}
\theoremstyle{remark}
\newtheorem{remark}[theorem]{Remark}
\newcommand{\mc}[1]{{\mathcal #1}}
\newcommand{\mb}[1]{{\mathbb #1}}
\newcommand\tint{{\textstyle\int}}
\renewcommand{\tilde}{\widetilde}
\newcommand{\ad}{\mathop{\rm ad }}
\newcommand{\Mat}{\mathop{\rm Mat }}
\renewcommand{\ker}{\mathop{\rm Ker }}
\newcommand{\Span}{\mathop{\rm Span }}
\definecolor{light}{gray}{.9}
\begin{document}


\title{A new approach to the Lenard-Magri scheme of integrability}

\author{
Alberto De Sole
\thanks{Dipartimento di Matematica, Universit\`a di Roma ``La Sapienza'',
00185 Roma, Italy ~~
and IHES, Bures sur Yvette, France~~
desole@mat.uniroma1.it ~~~~
Supported in part by PRIN and FIRB grants.
},~~
Victor G. Kac
\thanks{Department of Mathematics, M.I.T.,
Cambridge, MA 02139, USA.~~
and IHES, Bures sur Yvette, France~~
kac@math.mit.edu~~~~
Supported in part by Simons Fellowship.
}~~
and Refik Turhan 
\thanks{
Department of Engineering Physics,
Ankara University,
06100 Tandogan, Ankara
Turkey.~~
turhan@eng.ankara.edu.tr
}
}

\maketitle

\begin{abstract}
We develop a new approach to the Lenard-Magri scheme of integrability
of bi-Hamiltonian PDE's, when one of the Poisson structures is a strongly skew-adjoint 
differential operator.
\end{abstract}

\section{Introduction}\label{sec:1}

There have been a number of papers on classification and study of integrable PDE's
in the past few decades.
As a result, the 1-component integrable PDE's have been to a large extent classified.
%
In the 2-component case there have been only partial results,
see the survey \cite{MNW09} and references there.

In the present paper we prove integrability of the 2-component PDE's
\eqref{20130306:eq1}, \eqref{20130306:eq2}, \eqref{20130306:eq4}, \eqref{20130306:eq3b}.
All these equations enter in the same bi-Hamiltonian hierarchy of PDE's.
The corresponding two compatible Poisson structures
given by \eqref{20130307:eq1} and \eqref{20130307:eq2}
are local and have order 3 and 5.
Equation \eqref{20130306:eq4} appeared in \cite{MNW07}.
%

The proof of integrability (i.e. existence of integrals of motion in involution
with the associated Hamiltonian vector fields of arbitrarily high order),
uses the Lenard-Magri scheme.
Unfortunately (or fortunately) the existing methods, developed 
in \cite{Dor93}, \cite{BDSK09}, \cite{Wan09},
do not quite work here.
We therefore develop a new method based on the notion of strongly skew-adjoint
differential operators,
using the Lie superalgebra of variational polyvector fields.

We show that the Lenard-Magri scheme ``almost'' always works
provided that one of the Poisson structures $H_0$ 
is a non-degenerate strongly skew-adjoint operator.
Namely, we are able to show, by a general argument, 
that the Poisson brackets of conserved densities are Casimir elements for $H_0$,
but then we need to check by simple differential order considerations
that these brackets are actually zero.


Throughout the paper, unless otherwise specified,
all vector spaces are considered over a field $\mb F$ of characteristic zero.

\section{A Lemma on $\mb Z$-graded Lie superalgebras}\label{sec:2}

Let $W=W_{-1}\oplus W_0\oplus W_1\oplus\dots$ be a $\mb Z$-graded Lie superalgebra
such that elements in $W_j,\,j\geq-1$, have parity $j$ mod $2\in\mb Z/2\mb Z$.
Given $H\in W_1$, we define the following skew-symmetric bracket $\{\cdot\,,\,\cdot\}_H$
on $W_{-1}$ considered as an even space:
\begin{equation}\label{20130222:eq1}
\{f,g\}_H=[[H,f],g]
\,\,,\,\,\,\,
f,g\in W_{-1}\,.
\end{equation}
Recall that if $[H,H]=0$, then $\{\cdot\,,\,\cdot\}_H$ is actually a Lie bracket
on $W_{-1}$, considered as an even space (see e.g. \cite{DSK13}).
We also define the space of \emph{Casimirs} for $H$ as
\begin{equation}\label{20130222:eq8}
C_{-1}(H)\,:=\,\big\{f\in W_{-1}\,\big|\,[H,f]=0\big\}\subset W_{-1}\,.
\end{equation}
\begin{lemma}\label{20130222:lem}
\begin{enumerate}[(a)]
\item
Let $H_0,H_1\in W_1$. Denote by $\{\cdot\,,\,\cdot\}_0$ and $\{\cdot\,,\,\cdot\}_1$
the corresponding brackets on $W_{-1}$ given by \eqref{20130222:eq1}.
Let $f_0,f_1,\dots,f_{N+1}\in W_{-1}$ satisfy the equations
\begin{equation}\label{20130222:eq2}
[H_1,f_n]=[H_0,f_{n+1}]
\,\,\text{ for all }\,\,
 n=0,\dots,N\,.
\end{equation}
Then we have
$$
\{f_m,f_n\}_0=\{f_m,f_n\}_1=0
\,\,\text{ for all }\,\,
m,n=0,\dots,N+1\,.
$$
\item
Let $H_0,H_1\in W_1$.
Let $f_0,f_1,\dots,f_{N+1}\in W_{-1}$, with $f_0\in C_{-1}(H_0)$,
satisfy equations \eqref{20130222:eq2},
and let $\{g_n\}_{n\in\mb Z_+}\subset W_{-1}$ satisfy $[H_1,g_n]=[H_0,g_{n+1}]$
for all $n\in\mb Z_+$.
Then we have
$$
\{f_m,g_n\}_0=\{f_m,g_n\}_1=0
\,\,\,\,
\text{ for all } m=0,\dots,N+1,\,n\in\mb Z_+\,.
$$
\item
If $H\in W_1$ is such that $[H,H]=0$,
then $\ad H$ defines a Lie algebra homomorphisms $(W_{-1},\{\cdot\,,\,\cdot\}_H)\to(W_0,[\cdot\,,\,\cdot])$,
i.e.
$$
[H,\{f,g\}_H]=[[H,f],[H,g]]
\,\,\,\,
\text{ for all } f,g\in W_{-1}\,.
$$
\item
Let $H_0,H_1\in W_1$ be such that $[H_0,H_1]=0$.
Then
$C_{-1}(H_0)\subset V$ is a closed with respect to the bracket $\{\cdot\,,\,\cdot\}_1$.
\item
Let $H_0,H_1\in W_1$ be such that $[H_0,H_1]=[H_1,H_1]=0$.
Suppose that $f_0,f_1,\dots,f_{N+1}\in W_{-1}$ satisfy equations \eqref{20130222:eq2}.
Then
$\{f_{n+1},g\}_1\in C_{-1}(H_0)$ for all $n=0,\dots,N$
and $g\in C_{-1}(H_0)$.
\end{enumerate}
\end{lemma}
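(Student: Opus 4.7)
The plan rests on two basic observations about $W$. First, since the grading starts at $-1$, we have $[W_{-1},W_{-1}] \subset W_{-2} = 0$; applying super-Jacobi to $[H,[f,g]]$ gives $\{f,g\}_H + \{g,f\}_H = [H,[f,g]] = 0$, so each bracket $\{\cdot\,,\,\cdot\}_H$ is skew-symmetric on $W_{-1}$. Second, since $H \in W_1$ is odd, super-Jacobi applied to $[H,[H,f]]$ yields $[H,[H,f]] = \tfrac{1}{2}[[H,H],f]$, which vanishes whenever $[H,H]=0$. These two facts drive everything.

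Parts (c), (d), (e) are direct applications of super-Jacobi. For (c), expand
\[
[H,\{f,g\}_H] = [H,[[H,f],g]] = [[H,[H,f]],g] + [[H,f],[H,g]],
\]
the sign on the cross term being $+1$ because $[H,f]\in W_0$ has even parity; the first summand dies by the double-bracket identity, giving the homomorphism formula. For (d), take $f,g \in C_{-1}(H_0)$ and expand $[H_0,[[H_1,f],g]]$ the same way: the $[H_0,g]$-term dies immediately, and $[H_0,[H_1,f]] = [[H_0,H_1],f] - [H_1,[H_0,f]]$ vanishes term-by-term thanks to $[H_0,H_1]=0$ and $[H_0,f]=0$. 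For (e), the same expansion for $[H_0,\{f_{n+1},g\}_1]$ reduces everything to showing $[H_0,[H_1,f_{n+1}]] = 0$; here the Lenard--Magri relation $[H_0,f_{n+1}] = [H_1,f_n]$ converts this into $-[H_1,[H_1,f_n]] = -\tfrac{1}{2}[[H_1,H_1],f_n] = 0$.

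Part (a) is the combinatorial core and the step I expect to require the most care. Set $B_i(m,n) := \{f_m,f_n\}_i$; by the first observation the $B_i$ are antisymmetric in $m,n$. The Lenard--Magri chain \eqref{20130222:eq2} yields the shift
\[
B_0(m,n) = [[H_0,f_m],f_n] = [[H_1,f_{m-1}],f_n] = B_1(m-1,n) \qquad (m \in [1,N+1]),
\]
and antisymmetry applied to the right slot gives $B_0(m,n) = B_1(m,n-1)$ for $n \in [1,N+1]$. Equating the two produces the key recurrence $B_1(m-1,n) = B_1(m,n-1)$ on $[1,N+1]^2$, so $B_1$ is constant along each anti-diagonal within the valid window. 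A direct index check (the critical coordinates stay inside $[0,N+1]^2$ precisely because $m,n\le N+1$) shows that the recurrence carries any $B_1(m,n)$ to $B_1(n,m)$ on the same anti-diagonal; antisymmetry $B_1(n,m) = -B_1(m,n)$ then forces $B_1\equiv 0$ on the square, and the shift formula forces $B_0\equiv 0$.

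Part (b) is a variant of (a), tailored to the asymmetry between the finite chain $\{f_m\}$ and the infinite chain $\{g_n\}$. Set $C_i(m,n) := \{f_m,g_n\}_i$ (no longer skew in $m,n$). The hypothesis $f_0 \in C_{-1}(H_0)$ provides the anchor $C_0(0,n) = [[H_0,f_0],g_n] = 0$ for all $n\in\mb Z_+$. The two shift identities $C_0(m,n) = C_1(m-1,n)$ for $m \in [1,N+1]$ and $C_0(m,n) = C_1(m,n-1)$ for $n \geq 1$ (the latter derived from the infinite chain $[H_1,g_{n-1}] = [H_0,g_n]$ combined with skew-symmetry of $\{\cdot\,,\,\cdot\}_0$) give the recurrence $C_1(m-1,n) = C_1(m,n-1)$; iterating it backward in $m$ transports every $C_1(m,n)$ to $C_1(0,m+n) = C_0(0,m+n+1) = 0$, after which the first shift kills $C_0$ on $[1,N+1]\times\mb Z_+$ as well.
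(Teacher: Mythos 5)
Your proposal is correct and follows essentially the same route as the paper: the two shift identities obtained from the Lenard--Magri relations together with skew-symmetry (your ``constant along anti-diagonals, reflect, and use antisymmetry'' is exactly the paper's induction on $n-m$ in part (a), and your telescoping to $C_1(0,m+n)$ matches the paper's induction on $m$ in part (b)), while (c), (d), (e) are the same Jacobi-identity computations. The only addition is that you justify the skew-symmetry of $\{\cdot\,,\,\cdot\}_H$ and the identity $[H,[H,f]]=\tfrac12[[H,H],f]$ explicitly, which the paper takes for granted.
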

\begin{proof}
First, we prove part (a).
By skew-symmetry, we have $\{f_n,f_n\}_0=\{f_n,f_n\}_1=0$ for every $n$.
Assuming that $n>m$, we prove, by induction on $n-m$ that
$\{f_m,f_n\}_0=\{f_m,f_n\}_1=0$.
We have
$$
\{f_m,f_n\}_1=[[H_1,f_m],f_n]=[[H_0,f_{m+1}],f_n]=\{f_{m+1},f_n\}_0\,,
$$
which is zero by inductive assumption.
Similarly,
$$
\begin{array}{l}
\displaystyle{
\{f_m,f_n\}_0=-\{f_n,f_m\}_0=-[[H_0,f_n],f_m]
} \\
\displaystyle{
=-[[H_1,f_{n-1}],f_m]=-\{f_{n-1},f_m\}_1=\{f_m,f_{n-1}\}_1\,,
}
\end{array}
$$
which again is zero by induction. 

Next, we prove part (b). 
Since, by assumption, $f_0\in C_{-1}(H_0)$, we have
$\{f_0,g_n\}_0=[[H_0,f_0],g_n]=0$ for all $n$.
Furthermore,
$$
\{f_0,g_n\}_1=-\{g_n,f_0\}_1=-[[H_1,g_n],f_0]=-[[H_0,g_{n+1}],f_0]=\{f_0,g_{n+1}\}_0\,,
$$
which is zero by the previous case.
We next prove, by induction on $m\geq1$, that
$\{f_m,g_n\}_0=\{f_m,g_n\}_1=0$ for every $n\in\mb Z_+$.
We have
$$
\{f_m,g_n\}_0=[[H_0,f_m],g_n]=[[H_1,f_{m-1}],g_n]=\{f_{m-1},g_n\}_1\,,
$$
which is zero by inductive assumption, and
$$
\begin{array}{l}
\displaystyle{
\{f_m,g_n\}_1=-\{g_n,f_m\}_1=-[[H_1,g_n],f_m]=-[[H_0,g_{n+1}],f_m]
} \\
\displaystyle{
=[[H_0,f_m],g_{n+1}]=\{f_m,g_{n+1}\}_0
\,,
}
\end{array}
$$
which is zero by the previous case, completing the proof of part (b).

For $H\in W_1$ and $f,g\in W_{-1}$, we have, by the Jacobi identity,
$$
[H,\{f,g\}_H]
=[H,[[H,f],g]]
=[[H,[H,f]],g]+[[H,f],[H,g]]
\,.
$$
If $[H,H]=0$, we have $[H,[H,f]]=0$, since $H$ is odd,
proving part (c).

Next, we prove part (d).
If $f,g\in C_{-1}(H_0)$, we have
$$
[H_0,\{f,g\}_1]=[H_0,[[H_1,f],g]]\,,
$$
and this is zero since, by assumption, $H_0$ commutes with all elements $H_1$, $f$ and $g$.

Finally, we prove part (e).
Since, by assumption, $H_0$ commutes with both $H_1$ and $g$, we have,
by the Jacobi identity,
$$
\begin{array}{l}
\displaystyle{
[H_0,\{f_{n+1},g\}_1]
=[H_0,[[H_1,f_{n+1}],g]]
=[[H_0,[H_1,f_{n+1}]],g]
} \\
\displaystyle{
=-[[H_1,[H_0,f_{n+1}]],g]
=-[[H_1,[H_1,f_n]],g]
\,,}
\end{array}
$$
and this is zero since $(\ad H_1)^2=0$.
\end{proof}

\section{Application to the theory of Hamiltonian PDE's}\label{sec:3}

In the present paper we will use Lemma \ref{20130222:lem}
in the special case when $\mc W$ is the Lie superalgebra of variational polyvector fields
over an algebra of differential functions $\mc V$.

Recall from \cite{BDSK09}
that an algebra of differential function $\mc V$ in the variables $u_i,\,i\in I=\{1,\dots,\ell\}$,
is a differential algebra extension of the algebra of differential polynomials
$R_\ell=\mb F[u_i^{(n)}\,|\,i\in I,n\in\mb Z_+]$,
with the ``total derivative'' $\partial$ defined on generators by $\partial u_i^{(n)}=u_i^{(n+1)}$,
and endowed with commuting derivations $\frac{\partial}{\partial u_i^{(n)}}:\,\mc V\to\mc V$
extending the usual partial derivatives on $R_\ell$,
such that for every $f\in\,\mc V$ we have $\frac{\partial f}{\partial u_i^{(n)}}=0$
for all but finitely many values of $i$ and $n$,
and satisfying the commutation relation 
$[\frac{\partial}{\partial u_i^{(n)}},\partial]=\frac{\partial}{\partial u_i^{(n-1)}}$
for every $i\in I,n\in\mb Z_+$ (the RHS is considered to be 0 for $n=0$).

Recall from \cite{DSK13} some properties of the Lie superalgebra $\mc W$ 
of variational polyvector fields over $\mc V$ that we will need.
It is a $\mb Z$-graded Lie superalgebra $\mc W=\mc W_{-1}\oplus\mc W_0\oplus\mc W_1\oplus\dots$,
with the parity compatible with the $\mb Z$-grading.
Furthermore, 
$\mc W_{-1}=\mc V/\partial\mc V$ is the space of local functionals,
$\mc W_0=\mc V^\ell$ is the Lie algebra of evolutionary vector fields,
i.e. derivations of $\mc V$ commuting with $\partial$,
which have the form $X_P=\sum_{i\in I,n\in\mb Z_+}(\partial^n P_i)\frac{\partial}{\partial u_i^{(n)}}$, 
$P\in\mc V^\ell$.
The bracket of two evolutionary vector fields is given by the formula $[X_P,X_Q]=X_{[P,Q]}$,
where
$$
[P,Q]=D_Q(\partial)P-D_P(\partial)Q\,,
$$
and $D_P(\partial)\in\Mat_{\ell\times\ell}\mc V[\partial]$ 
is the \emph{Frechet derivative} of $P$:
\begin{equation}\label{20130222:eq5}
{D_P(\partial)}_{ij}=\sum_{n\in\mb Z_+}\frac{\partial P_i}{\partial u_j^{(n)}}\partial^n\,.
\end{equation}
The Lie bracket between elements $P\in W_0$ and $\tint f\in W_{-1}$ is given by
\begin{equation}\label{20130307:eq21}
[P,\tint f]=\tint X_P(f)=\tint P\cdot \delta f\,,
\end{equation}
where $\delta f=\big(\frac{\delta f}{\delta u_i}\big)_{i\in I}\in\mc V^\ell$ 
denotes the vector of variational derivatives of $\tint f\in\mc V/\partial\mc V$:
\begin{equation}\label{20130222:eq6}
\frac{\delta f}{\delta u_i}=
\sum_{n\in\mb Z_+}(-\partial)^n\frac{\partial f}{\partial u_i^{(n)}}\,.
\end{equation}
Finally, $\mc W_1$ is the space of skew-adjoint $\ell\times\ell$ matrix 
differential operators over $\mc V$,
the Lie bracket between $H\in\mc W_1$ and $\tint f\in\mc V/\partial\mc V=\mc W_{-1}$ is
given by
\begin{equation}\label{20130307:eq22}
[H,\tint f]=H(\partial)\delta f\,,
\end{equation}
and a \emph{Poisson structure} on $\mc V$ is an element $H\in\mc W_1$
such that $[H,H]=0$.

For $H\in\mc W_{1}$, the corresponding skew-symmetric bracket \eqref{20130222:eq1}
on $\mc W_{-1}=\mc V/\partial\mc V$ is given by the usual formula
\begin{equation}\label{20130222:eq7}
\{\tint f,\tint g\}_H
=\tint \delta g\cdot H(\partial)\delta f\,,
\end{equation}
and this bracket
defines a Lie algebra structure on $\mc V/\partial\mc V$
if and only if $H$ is a Poisson structure on $\mc V$.
In this context, 
the space \eqref{20130222:eq8} of Casimir elements for $H$ is
\begin{equation}\label{20130222:eq9}
C_{-1}(H)
=\Big\{\tint f\in\mc V/\partial\mc V\,\Big|\,H(\partial)\delta f=0\Big\}\,.
\end{equation}

Recall that the \emph{Hamiltonian partial differential equation} 
for the Poisson structure $H\in\mc W_1$
and the Hamiltonian functional $\tint h\in\mc V/\partial\mc V$ 
is the following evolution equation in the variables $u_1,\dots,u_\ell$:
\begin{equation}\label{20130222:eq10}
\frac{du_i}{dt}=\sum_{j=1}^\ell H_{ij}(\partial)\frac{\delta h}{\delta u_j}\,.
\end{equation}
An \emph{integral of motion} for the Hamiltonian equation \eqref{20130222:eq10}
is a local functional $\tint f\in\mc V/\partial\mc V$ such that $\{\tint h,\tint f\}_H=0$.
Equation \eqref{20130222:eq10} is said to be \emph{integrable}
if there is an infinite sequence of linearly independent integrals of motion 
$\tint h_0=\tint h,\tint h_1,\tint h_2,\dots$
in involution: $\{\tint h_m,\tint h_n\}_H=0$ for all $m,n\in\mb Z_+$.

One of the main techniques for proving integrability of a Hamiltonian equation
is based on the so called \emph{Lenard-Magri scheme of integrability}.
This applies when a given evolution equation has a bi-Hamiltonian form,
i.e. it can be written in Hamiltonian form in two ways:
\begin{equation}\label{20130222:eq11a}
\frac{du}{dt}=H_1(\partial)\delta h_0=H_0(\partial)\delta h_1\,,
\end{equation}
where $H_0,H_1$ are compatible Poisson structures on $\mc V$,
namely they satisfy $[H_0,H_0]=[H_0,H_1]=[H_1,H_1]=0$.
In this situation, the Lenard-Magri scheme consists in finding a sequence
of local functionals $\tint h_0,\tint h_1,\tint h_2,\dots$
satisfying the recursive conditions
\begin{equation}\label{20130222:eq11}
H_1(\partial)\delta h_n=H_0(\partial)\delta h_{n+1}\,,
\end{equation}
for all $n\in\mb Z_+$.
Lemma \ref{20130222:lem}(a) and (c)  guarantees that, in this situation,
all local functionals $\tint h_n,\,n\in\mb Z_+$ are integrals of motion
in involution with respect to both Poisson brackets
$\{\cdot\,,\,\cdot\}_0$ and $\{\cdot\,,\,\cdot\}_1$,
and the higher symmetries $P_n=H_0(\partial)\delta h_n$ commute.
Indeed, we have the following immediate consequences of Lemma \ref{20130222:lem}.
\begin{corollary}\label{20130222:cor}
Let $H_0,H_1$ be compatible Poisson structures on $\mc V$,
and let $\{\cdot\,,\,\cdot\}_0$ and $\{\cdot\,,\,\cdot\}_1$
be the corresponding brackets on $\mc V/\partial\mc V$ given by \eqref{20130222:eq7}.
Let $\{\tint h_n\}_{n\in\mb Z_+}\subset\mc V/\partial\mc V$ 
be a sequence of local functionals satisfying 
the Lenard-Magri recursive equations \eqref{20130222:eq11}.
Then
all elements $\tint h_n$ are integrals of motion for the bi-Hamiltonian equation \eqref{20130222:eq11a} 
in involution with respect to both Poisson brackets for $H_0$ and $H_1$:
$\{\tint h_m,\tint h_n\}_0=\{\tint h_m,\tint h_n\}_1=0$,
and all Hamiltonian vector fields $P_n=H(\partial)\delta h_n$ commute: 
$[P_m,P_n]=0$, for all $m,n\in\mb Z_+$.
\end{corollary}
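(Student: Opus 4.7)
My plan is to reduce the corollary to two applications of Lemma \ref{20130222:lem} after translating the hypotheses into the abstract framework of a $\mb Z$-graded Lie superalgebra. Concretely, I would take $W=\mc W$, the Lie superalgebra of variational polyvector fields over $\mc V$, so that $W_{-1}=\mc V/\partial\mc V$, $W_0=\mc V^\ell$ is the Lie algebra of evolutionary vector fields, and $W_1$ is the space of skew-adjoint matrix differential operators, with $H_0,H_1\in\mc W_1$.

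First, using \eqref{20130307:eq22}, the Lenard-Magri recursions \eqref{20130222:eq11} read exactly as $[H_1,\tint h_n]=[H_0,\tint h_{n+1}]$, which is the hypothesis \eqref{20130222:eq2} of Lemma \ref{20130222:lem}(a) with $f_n=\tint h_n$. Since the sequence $\{\tint h_n\}_{n\in\mb Z_+}$ is infinite, for any fixed pair of indices $m,n$ I can pick $N$ so that both $m,n\leq N+1$ and apply Lemma \ref{20130222:lem}(a) to the truncated finite sequence. This yields $\{\tint h_m,\tint h_n\}_0=\{\tint h_m,\tint h_n\}_1=0$, settling the involution claim.

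For the commutativity of the Hamiltonian vector fields $P_n=H_0(\partial)\delta h_n$, I would then invoke Lemma \ref{20130222:lem}(c) with $H=H_0$, which is legitimate because the Poisson condition gives $[H_0,H_0]=0$. Under the identification via \eqref{20130307:eq22}, $[H_0,\tint h_n]\in\mc W_0$ is precisely the evolutionary vector field $P_n$, so the identity
\[
[H_0,\{\tint h_m,\tint h_n\}_0]=[[H_0,\tint h_m],[H_0,\tint h_n]]=[P_m,P_n],
\]
together with the vanishing of $\{\tint h_m,\tint h_n\}_0$ established in the previous step, forces $[P_m,P_n]=0$.

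I do not expect a genuine obstacle: the corollary is essentially a dictionary translation of parts (a) and (c) of Lemma \ref{20130222:lem} into the variational calculus setting. The only minor point deserving care is the identification of the bracket $[\cdot,\cdot]$ on $\mc W_0$ with the commutator of evolutionary vector fields on $\mc V$, but this is already part of the structure of $\mc W$ recalled at the beginning of Section \ref{sec:3}.
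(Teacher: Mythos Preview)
Your proposal is correct and follows exactly the same route as the paper: the paper's proof simply states that the first claim is a special case of Lemma~\ref{20130222:lem}(a) and the second follows from Lemma~\ref{20130222:lem}(c). Your write-up just makes explicit the dictionary (via \eqref{20130307:eq22}) and the truncation argument for applying part~(a) to an infinite sequence, which the paper leaves implicit.
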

\begin{proof}
The first statement is a special case of Lemma \ref{20130222:lem}(a),
and the second statement follows by Lemma \ref{20130222:lem}(c).
\end{proof}

The main problem in applying the Lenard-Magri scheme of integrability 
is to show that at each step $n$ the recursive equation \eqref{20130222:eq11}
can be solved for $\tint h_{n+1}\in\mc V/\partial\mc V$.
This problem is split in three parts.
First, under the assumption that $\mc V$ is a domain 
and the Poisson structure $H_0$ is non-degenerate
(cf. Definition \ref{20130222:def1} below),
Theorem \ref{20130222:thm1} below
guarantees that,
if an element $F\in\mc V^\ell$ exists such that 
$H_1(\partial)\delta h_n=H_0(\partial)F$,
then $F$ is \emph{closed}, i.e. it has self-adjoint Frechet derivative:
$D_{F}(\partial)^*=D_{F}(\partial)$.
Next,
Theorem \ref{20130222:thm2} below
shows that, if $F\in\mc V^\ell$ is closed,
then it is \emph{exact} in a normal extension $\tilde{\mc V}$
of the algebra of differential algebra function $\mc V$:
$F=\delta\tint h$ for some $h\in\tilde{\mc V}$.
Hence, we reduced our problem to proving that
$H_1(\partial)\delta h_n\in H_0(\partial)\mc V^\ell$.
There is no universal technique to solve this problem,
but there are various approaches which work in specific examples
(see e.g. \cite{BDSK09}, \cite{DSK13}, \cite{DSK12}, \cite{Dor93}, \cite{Olv93}, \cite{Wan09}).
In Proposition \ref{20130222:prop3} below
we propose an ansatz for solving this problem,
under the assumption that the Poisson structure $H_0$ is strongly skew-adjoint
(cf. Definition \ref{20130222:def3} below),
and that the given finite Lenard-Magri sequence $\tint h_0,\dots,\tint h_n$
starts  with a Casimir element for $H_0$: $H_0(\partial)\delta(\tint h_0)=0$.
In the following sections we will be able to apply successfully this ansatz
to prove integrability of the compatible bi-Hamiltonian
PDE's in two variables
\eqref{20130306:eq1}, \eqref{20130306:eq2}, \eqref{20130306:eq4} and \eqref{20130306:eq3b}.

\begin{definition}\label{20130222:def1}
Assume that $\mc V$ is a domain.
A matrix differential operator $H\in\Mat_{\ell\times\ell}\mc V[\partial]$
is \emph{non-degenerate} if 
it is not a left (or right) zero divisor in $\Mat_{\ell\times\ell}\mc V[\partial]$
(equivalently, if its Dieudonn\'e determinant in non-zero).
\end{definition}
\begin{theorem}[see e.g. {\cite[Thm.2.7]{BDSK09}}]\label{20130222:thm1}
Let $H_0,\,H_1\in\Mat_{\ell\times\ell}\mc V[\partial]$ be compatible Poisson structures
on the algebra of differential functions $\mc V$, which is assumed to be a domain,
and suppose that $H_0$ is non-degenerate.
If $\tint h_0,\tint h_1\in\mc V/\partial\mc V$ and $F\in\mc V^\ell$
are such that $H_1(\partial)\delta h_0=H_0(\partial)\delta h_1$
and $H_1(\partial)\delta h_1=H_0(\partial)F$,
then $F$ is closed: $D_F(\partial)=D_F(\partial)^*$.
\end{theorem}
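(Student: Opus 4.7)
The plan is to work inside the Lie superalgebra $\mc W$ of variational polyvector fields and establish the matrix equality
$H_0(\partial)(D_F(\partial) - D_F^*(\partial)) H_0(\partial) = 0$
in $\Mat_{\ell\times\ell} \mc V[\partial]$; non-degeneracy of $H_0$ will then allow cancellation on both sides, yielding $D_F = D_F^*$. Translating the hypotheses, set $f_i = \tint h_i \in \mc W_{-1}$ for $i = 0, 1$: the first equation becomes $[H_1, f_0] = [H_0, f_1]$ in $\mc W_0$, while the second says $[H_1, f_1] = H_0(\partial) F$ in $\mc W_0 = \mc V^\ell$.

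The main computation evaluates the triple bracket $[H_0, [H_1, f_1]] \in \mc W_1$ in two ways. First, by the graded Jacobi identity of $\mc W$ together with the compatibility $[H_0, H_1] = 0$, the relation $[H_0, f_1] = [H_1, f_0]$, and the Jacobi identity $[H_1, H_1] = 0$, one obtains
\[
[H_0, [H_1, f_1]] \;=\; -[H_1, [H_0, f_1]] \;=\; -[H_1, [H_1, f_0]] \;=\; -\tfrac{1}{2}[[H_1, H_1], f_0] \;=\; 0.
\]
Second, using $[H_1, f_1] = H_0(\partial) F$, the same triple bracket equals $[H_0, X_{H_0(\partial) F}]$. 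Applying the explicit formula $[H_0, X_P] = D_P H_0 + H_0 D_P^* - X_P(H_0)$ for the bracket between $H_0 \in \mc W_1$ and an evolutionary vector field $X_P \in \mc W_0$, combined with the Leibniz rule $D_{H_0 F} = H_0 D_F + L_{H_0}[F]$ (with $L_{H_0}[F]$ the term coming from differentiating the coefficients of $H_0$ in the direction $F$), a short matrix computation yields
\[
[H_0, H_0(\partial) F] \;=\; H_0(\partial)(D_F - D_F^*) H_0(\partial) \;+\; \Psi_{H_0}(F),
\]
where $\Psi_{H_0}(F) := L_{H_0}[F] H_0 + H_0 L_{H_0}[F]^* - X_{H_0 F}(H_0)$.

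The crucial step is to show $\Psi_{H_0}(K) = 0$ for \emph{every} $K \in \mc V^\ell$, not merely for closed $K$; this is a universal consequence of the Jacobi identity $[H_0, H_0] = 0$, obtained by unfolding the latter as a polynomial identity among the coefficients of $H_0$ (most conveniently via the general identity $L_{H_0}[K] P = X_P(H_0) K$ relating the Leibniz term to the action on coefficients, which reduces $\Psi_{H_0}$ to an expression whose vanishing is precisely the Jacobi identity). Combining the two computations of the triple bracket then gives $H_0(\partial)(D_F - D_F^*) H_0(\partial) = 0$, and non-degeneracy of $H_0$ (it is neither a left nor right zero-divisor in $\Mat_{\ell\times\ell}\mc V[\partial]$) lets us cancel $H_0$ on each side to conclude $D_F = D_F^*$, completing the proof. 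The main obstacle is the universality claim $\Psi_{H_0} \equiv 0$: although it follows from $[H_0, H_0] = 0$, verifying this cleanly requires careful bookkeeping of Fréchet derivatives in the Lie superalgebra $\mc W$ and the precise form of the bracket $[\mc W_1, \mc W_0] \to \mc W_1$.
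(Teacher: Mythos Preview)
The paper does not supply its own proof of this theorem: it is stated with a reference to \cite[Thm.~2.7]{BDSK09} and used as a black box thereafter. Your outline is exactly the argument given in that reference: compute $[H_0,[H_1,\tint h_1]]\in\mc W_1$ in two ways inside the Lie superalgebra $\mc W$ of variational polyvector fields, obtain $H_0(\partial)\big(D_F(\partial)-D_F^*(\partial)\big)H_0(\partial)=0$, and cancel $H_0$ on both sides via non-degeneracy.

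The step you single out as the obstacle --- that $\Psi_{H_0}(K)=0$ for \emph{every} $K\in\mc V^\ell$, not only for exact $K$ --- is indeed the crux, and it is precisely what is established in \cite{BDSK09}. Your auxiliary identity $L_{H_0}[K]\,P = X_P(H_0)\,K$ is correct (both sides equal $\sum_{j,k,m,n}\frac{\partial a_{ik,m}}{\partial u_j^{(n)}}K_k^{(m)}P_j^{(n)}$ where $a_{ik,m}$ are the coefficients of $H_0$), but by itself it only rewrites one of the three terms in $\Psi_{H_0}$; the cancellation of the remaining terms genuinely requires the Jacobi identity $[H_0,H_0]=0$. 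Concretely, $[H_0,H_0]\in\mc W_2$ is a polynomial identity in $\lambda,\mu$ among the coefficients of $H_0$ (the PVA Jacobi identity, equation (1.49) of \cite{BDSK09}), and $\Psi_{H_0}(F)$ is obtained from it by substituting $\mu\mapsto\partial$ acting on $F$ in one slot; hence $[H_0,H_0]=0$ forces $\Psi_{H_0}(F)=0$ for arbitrary $F$. Your sketch stops short of carrying this out, so as written it is an accurate roadmap of the cited proof rather than a complete argument.
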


Consider the following filtration of the algebra of differential functions $\mc V$:
$$
\mc V_{m,i}\,=\,
\Big\{ f\in\mc V\,\Big|\,\frac{\partial f}{\partial u_j^{(n)}}=0
\text{ for all } (n,j)>(m,i)
\Big\}\,,
$$
where $>$ denotes lexicographic order.
By definition, $\frac{\partial}{\partial u_j^{(n)}}(\mc V_{m,i})$
is zero for $(n,j)>(m,i)$, and it is contained in $\mc V_{m,i}$ for $(n,j)\leq(m,i)$.
\begin{definition}\label{20130222:def2}
The algebra of differential functions $\mc V$ is called \emph{normal}
if $\frac{\partial}{\partial u_i^{(m)}}(\mc V_{m,i})=\mc V_{m,i}$
for all $i\in I,m\in\mb Z_+$.
\end{definition}
Note that any algebra of differential function can be extended to a normal one
(see \cite{DSK13a}).
\begin{theorem}[{\cite[Prop.1.9]{BDSK09}}]\label{20130222:thm2}
If $F\in\mc V^\ell$ is exact, i.e. $F=\delta f$ for some $\tint f\in\mc V/\partial\mc V$,
then it is closed, i.e. $D_F(\partial)=D_F(\partial)^*$.
Conversely, 
if $\mc V$ is a normal algebra of differential functions
and $F\in\mc V^\ell$ is closed, then it is exact.
\end{theorem}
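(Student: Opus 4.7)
The plan is to treat the two directions separately, since only the converse requires normality. The forward direction (exact implies closed) is a purely formal identity.

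For the forward direction, starting from $F_i=\frac{\delta f}{\delta u_i}=\sum_{m\in\mb Z_+}(-\partial)^m\frac{\partial f}{\partial u_i^{(m)}}$, I would directly compute the matrix entry
\begin{equation*}
D_F(\partial)_{ij}
= \sum_n \frac{\partial F_i}{\partial u_j^{(n)}}\partial^n
= \sum_{m,n}\frac{\partial}{\partial u_j^{(n)}}\Big((-\partial)^m\frac{\partial f}{\partial u_i^{(m)}}\Big)\partial^n\,,
\end{equation*}
and repeatedly apply the commutation relation $\big[\frac{\partial}{\partial u_j^{(n)}},\partial\big]=\frac{\partial}{\partial u_j^{(n-1)}}$ to push $\frac{\partial}{\partial u_j^{(n)}}$ through $(-\partial)^m$. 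The outcome is a sum of operators of the form $(-\partial)^{m-k}\circ\frac{\partial^2 f}{\partial u_i^{(m)}\partial u_j^{(n-k)}}\circ\partial^{n-k}$ weighted by binomial coefficients. Since $\frac{\partial^2 f}{\partial u_i^{(m)}\partial u_j^{(n)}}$ is symmetric under the exchange $(i,m)\leftrightarrow(j,n)$, reindexing the analogous computation for $D_F(\partial)^*_{ij}=\sum_n(-\partial)^n\circ\frac{\partial F_j}{\partial u_i^{(n)}}$ yields the same expression, so $D_F(\partial)=D_F(\partial)^*$.

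For the converse, I would argue by induction on the lexicographically largest pair $(N,I)$ such that some $\frac{\partial F_j}{\partial u_I^{(N)}}$ is non-zero, using the filtration $\mc V_{m,i}$ defined before Definition \ref{20130222:def2}. Examining the self-adjointness condition $D_F(\partial)=D_F(\partial)^*$ at the highest-degree coefficient of $\partial^N$ should imply that $\frac{\partial F_j}{\partial u_I^{(N)}}=\frac{\partial F_I}{\partial u_j^{(N)}}$ for all $j$, together with $\frac{\partial F_I}{\partial u_I^{(N)}}\in\mc V_{N,I}$. Normality of $\mc V$ then supplies a $\phi\in\mc V_{N,I}$ with $\frac{\partial\phi}{\partial u_I^{(N)}}=\frac{\partial F_I}{\partial u_I^{(N)}}$. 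The forward direction shows that $F-\delta\phi$ is still closed, and the preceding compatibility conditions ensure that at each index $j$ the top-order dependence on $u_I^{(N)}$ is exactly cancelled by $\delta\phi$; hence $F-\delta\phi$ has strictly smaller filtration index, and induction applies. The base case reduces to the statement that a closed $F\in\mc V_{0,1}^\ell$ depending only on the $u_i$'s is the variational derivative of a function in $\mc V_{0,1}$, which follows from the ordinary Poincar\'e lemma applied in the variables $u_1,\dots,u_\ell$, using normality of $\mc V_{0,\ell}$ under $\frac{\partial}{\partial u_\ell},\dots,\frac{\partial}{\partial u_1}$.

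The main obstacle will be the inductive step: one must verify that the single antiderivative $\phi$ produced by normality simultaneously corrects the top-order behavior of every component $F_j$, not just of $F_I$. This is precisely the role of the self-adjointness hypothesis, which plays the part of the usual $\partial_i F_j=\partial_j F_i$ integrability condition in the classical Poincar\'e lemma; but to extract it one needs to carefully isolate the highest-order coefficient of $D_F(\partial)-D_F(\partial)^*=0$ and confirm that it yields the desired symmetry of the relevant partial derivatives, all while keeping track of the lexicographic filtration so that each induction step genuinely decreases $(N,I)$.
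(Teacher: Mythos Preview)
The paper does not prove this theorem at all; it is quoted as \cite[Prop.~1.9]{BDSK09} and used as a black box, so there is no in-paper argument to compare your proposal against. Your treatment of the forward direction is the standard computation and is correct.

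For the converse, the overall philosophy (descend along the filtration using normality, in analogy with the classical Poincar\'e lemma) is indeed what underlies the proof in \cite{BDSK09}, but the specific inductive step you describe breaks down. If $\phi\in\mc V_{N,I}$ then
\[
\frac{\delta\phi}{\delta u_j}=\sum_{n}(-\partial)^n\frac{\partial\phi}{\partial u_j^{(n)}}
\]
generically has differential order up to $2N$, since $(-\partial)^N$ applied to $\frac{\partial\phi}{\partial u_j^{(N)}}\in\mc V_{N,I}$ produces derivatives $u_k^{(2N)}$. Thus $F-\delta\phi$ need not have smaller filtration index than $F$; it will usually have a much larger one, and the induction does not close. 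A symptom of the same problem is the relation you extract at top order: comparing the $\partial^N$-coefficients of $D_F(\partial)_{ij}=D_F(\partial)^*_{ij}$ actually gives $\frac{\partial F_i}{\partial u_j^{(N)}}=(-1)^N\frac{\partial F_j}{\partial u_i^{(N)}}$, not the unsigned equality you wrote; for $i=j$ this already forces $N$ to be even whenever $\frac{\partial F_i}{\partial u_i^{(N)}}\neq0$, which is a hint that the correct antiderivative must live at level $N/2$ rather than $N$.

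The argument in \cite{BDSK09} (and in \cite{Olv93}) gets around this by either constructing an explicit homotopy operator from the higher Euler operators, or equivalently by exploiting self-adjointness to show that $F$ has a specific quadratic structure in the top half of the jet variables, so that one can choose $\phi$ of differential order $N/2$ with $\delta\phi$ matching $F$ at top order. Your single antiderivative at level $(N,I)$ is not the right object; the missing idea is precisely this halving of the order.
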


Note that if $H\in\Mat_{\ell\times\ell}\mc V[\partial]$ is a skew-adjoint operator,
then $H(\partial)\mc V^\ell\perp\ker H(\partial)$,
where the orthogonal complement is with respect to the pairing 
$\mc V^\ell\times\mc V^\ell\to\mc V/\partial\mc V$ given by $(F,P)\mapsto\tint F\cdot P$.
\begin{definition}\label{20130222:def3}
A skew-adjoint operator $H\in\Mat_{\ell\times\ell}\mc V[\partial]$ 
is called \emph{strongly skew-adjoint} if the following conditions hold:
\begin{enumerate}[(i)]
\item
$\ker H(\partial)\subset\delta(\mc V/\partial\mc V)$,
\item
$\big(\ker H(\partial)\big)^\perp=H(\partial)\mc V^\ell$.
\end{enumerate}
\end{definition}
Let $\tint h_0\in C_{-1}(H_0)$,
and let $\tint h_0,\dots,\tint h_N\in\mc V/\partial\mc V$ 
be a finite sequence satisfying the Lenard-Magri 
recursive equations \eqref{20130222:eq11}.
Lemma \ref{20130222:lem}(d) and (e), in this context, imply the following result:
\begin{corollary}\label{20130222:cor2}
$H_0(\partial)\delta\{\tint h_n,\tint g\}_1=0$
for all $n=0,\dots,N$ and for all $\tint g\in C_{-1}(H_0)$.
\end{corollary}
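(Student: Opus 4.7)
The plan is to invoke parts (d) and (e) of Lemma \ref{20130222:lem} in the setting of $\mc W$, the Lie superalgebra of variational polyvector fields. Compatibility of the Poisson structures $H_0,H_1$ gives $[H_0,H_1]=[H_1,H_1]=0$, so the hypotheses of both parts of the lemma are satisfied. Moreover, by \eqref{20130222:eq9}, membership $\tint f\in C_{-1}(H_0)$ is equivalent to $H_0(\partial)\delta f=0$. Hence it suffices to prove that $\{\tint h_n,\tint g\}_1\in C_{-1}(H_0)$ for every $n=0,\dots,N$.

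I would split the argument according to the value of $n$. For $n=0$, both $\tint h_0$ and $\tint g$ lie in $C_{-1}(H_0)$ by hypothesis, and Lemma \ref{20130222:lem}(d) directly yields $\{\tint h_0,\tint g\}_1\in C_{-1}(H_0)$. For $n=1,\dots,N$, I would apply Lemma \ref{20130222:lem}(e) to the finite Lenard-Magri sequence $\tint h_0,\dots,\tint h_N$: identifying it with the sequence $f_0,\dots,f_{N'+1}$ of length $N+1=N'+2$ (so $N'=N-1$) and taking $g=\tint g\in C_{-1}(H_0)$, the conclusion yields $\{\tint h_{n'+1},\tint g\}_1\in C_{-1}(H_0)$ for each $n'=0,\dots,N-1$, that is, $\{\tint h_n,\tint g\}_1\in C_{-1}(H_0)$ for each $n=1,\dots,N$.

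Combining the two cases establishes $\{\tint h_n,\tint g\}_1\in C_{-1}(H_0)$ for all $n=0,\dots,N$, which by \eqref{20130222:eq9} is exactly the desired vanishing $H_0(\partial)\delta\{\tint h_n,\tint g\}_1=0$. I do not expect any substantive obstacle: Lemma \ref{20130222:lem} already encapsulates the entire super-Jacobi computation, and the corollary is merely its specialization. The only bookkeeping point is to reconcile the index conventions of the lemma (sequences of length $N+2$) with those of the corollary (sequences of length $N+1$), and to notice that the boundary case $n=0$, which is not covered by part (e), requires precisely the extra hypothesis $\tint h_0\in C_{-1}(H_0)$ and is handled by part (d).
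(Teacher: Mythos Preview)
Your proof is correct and follows exactly the approach indicated in the paper, which simply states that the corollary is implied by Lemma~\ref{20130222:lem}(d) and (e). Your careful index bookkeeping (matching the $N{+}1$-term sequence $\tint h_0,\dots,\tint h_N$ to the $N'{+}2$-term sequence in part (e) with $N'=N-1$, and handling $n=0$ separately via part (d)) is precisely the content the paper leaves implicit.
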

If $H_0$ is non-degenerate, its kernel in $\mc V^\ell$ is finite-dimensional.
Therefore it is reasonable to hope that,
by computing explicitly $\ker H_0$ and $C_{-1}(H_0)$,
and carefully looking at the recursive equation \eqref{20130222:eq11}
one can prove that, in fact, 
$\{\tint h_n,\tint g\}_1=0$ for all $\tint g\in C_{-1}(H_0)$.
In this case, assuming that $H_0$ is strongly skew-adjoint,
the following proposition guarantees that we can successfully apply the Lenard-magri scheme.
\begin{proposition}\label{20130222:prop3}
Suppose that $H_0$ is a strongly skew-adjoint operator 
and that $\{\tint h,\tint g\}_1=0$ for all $\tint g\in C_{-1}(H_0)$,
Then $H_1(\partial)\delta h\in H_0(\partial)\mc V^\ell$.
\end{proposition}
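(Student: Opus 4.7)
The plan is to reduce the desired inclusion to the hypothesis by unpacking both clauses of Definition \ref{20130222:def3} simultaneously. By clause (ii) of that definition, the image $H_0(\partial)\mc V^\ell$ coincides with the orthogonal complement $(\ker H_0(\partial))^\perp$ with respect to the pairing $(F,P)\mapsto\tint F\cdot P$. Consequently, it suffices to verify that
$$\tint K\cdot H_1(\partial)\delta h = 0 \qquad \text{for every } K\in\ker H_0(\partial).$$

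Clause (i) is what turns this into the given hypothesis. Every $K\in\ker H_0(\partial)$ can be written as $K=\delta g$ for some $\tint g\in\mc V/\partial\mc V$, and the relation $H_0(\partial)\delta g=0$ says, by \eqref{20130222:eq9}, exactly that $\tint g\in C_{-1}(H_0)$. Using formula \eqref{20130222:eq7} for the Poisson bracket associated to $H_1$, I then compute
$$\tint K\cdot H_1(\partial)\delta h = \tint \delta g\cdot H_1(\partial)\delta h = \{\tint h,\tint g\}_1,$$
which is zero by the standing hypothesis. Since $K\in\ker H_0(\partial)$ was arbitrary, we conclude $H_1(\partial)\delta h\in(\ker H_0(\partial))^\perp = H_0(\partial)\mc V^\ell$, completing the argument.

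The genuine content of the proposition is not in its proof but in the fact that Definition \ref{20130222:def3} has been engineered to make the argument automatic: condition (ii) converts a membership question in the image of $H_0$ into a perpendicularity question against its kernel, while condition (i) identifies every kernel element with a variational derivative, thereby turning perpendicularity into a Poisson-bracket computation against Casimirs. There is no real obstacle at this step. The substantive obstacle lies elsewhere in the program, namely (a) verifying that the specific operator $H_0$ of interest is strongly skew-adjoint, and (b) strengthening the conclusion of Corollary \ref{20130222:cor2}, which only says that $\{\tint h_n,\tint g\}_1\in C_{-1}(H_0)$ is Casimir-valued, into the actual vanishing $\{\tint h_n,\tint g\}_1=0$ for $\tint g\in C_{-1}(H_0)$. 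This latter step is expected to be handled by an explicit finite-dimensional computation of $\ker H_0$ and $C_{-1}(H_0)$ together with a differential-order argument on the recursion \eqref{20130222:eq11}.
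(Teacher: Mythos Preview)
Your proof is correct and follows essentially the same approach as the paper's: unwind the hypothesis via \eqref{20130222:eq7} to get $H_1(\partial)\delta h\perp\delta C_{-1}(H_0)$, use condition~(i) of Definition~\ref{20130222:def3} to upgrade this to $H_1(\partial)\delta h\perp\ker H_0(\partial)$, then use condition~(ii) to conclude membership in the image. Your additional commentary on where the real work lies (verifying strong skew-adjointness and upgrading Corollary~\ref{20130222:cor2} to actual vanishing) is accurate and matches the paper's strategy.
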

\begin{proof}
By assumption, we have $\tint \delta g\cdot H_1(\partial)\delta h=0$ for all $\tint g\in C_{-1}(H_0)$,
i.e. $H_1(\partial)\delta h\perp\delta C_{-1}(H_0)$.
By condition (i) of the strong skew-adjointness assumption on $H_0$
this implies that $H_1(\partial)\delta h\perp\ker H_0(\partial)$,
and therefore, by the condition (ii), we conclude that $H_1(\partial)\delta h\in H_0(\partial)\mc V^\ell$,
proving the claim.
\end{proof}

To conclude the section, 
we state the following result, which will be used in the following sections.
\begin{corollary}\label{20130222:cor4}
Let $H_0,H_1$ be compatible Poisson structures on $\mc V$.
Let $\{\tint f_n\}_{n=0}^N\subset\mc V/\partial\mc V$ 
be a finite sequence satisfying 
the Lenard-Magri recursive equations \eqref{20130222:eq11},
with $\tint f_0\in C_{-1}(H_0)$,
and let $\{\tint g_n\}_{n=0}^\infty\subset\mc V/\partial\mc V$ 
be an infinite sequence also satisfying 
the Lenard-Magri recursive equations \eqref{20130222:eq11}.
Then the two Lenard-Magri sequences are compatible, in the sense that
$\{\tint f_m,\tint g_n\}_0=\{\tint f_m,\tint g_n\}_1=0$ for all $m=0,\dots,N,\,n\in\mb Z_+$.
\end{corollary}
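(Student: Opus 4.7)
The plan is to deduce this corollary directly from Lemma \ref{20130222:lem}(b), applied in the specific case where the abstract $\mb Z$-graded Lie superalgebra $W$ is taken to be the Lie superalgebra $\mc W$ of variational polyvector fields over $\mc V$. All the necessary dictionary has already been set up earlier in Section \ref{sec:3}: the degree $-1$ component is $\mc W_{-1} = \mc V/\partial\mc V$, the degree $+1$ component consists of skew-adjoint matrix differential operators, and the Lie superbracket between $H \in \mc W_1$ and $\tint f \in \mc W_{-1}$ is computed by $[H,\tint f] = H(\partial)\delta f$, as recalled in \eqref{20130307:eq22}.

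Under this translation, the recursive equations \eqref{20130222:eq11} imposed in the hypothesis,
\[
H_1(\partial)\delta f_n = H_0(\partial)\delta f_{n+1}, \qquad H_1(\partial)\delta g_n = H_0(\partial)\delta g_{n+1},
\]
are exactly the abstract recursive equations \eqref{20130222:eq2} for the sequences $\{f_n\}_{n=0}^{N+1}$ (with the ``$N+1$'' of the lemma corresponding to our $N$) and $\{g_n\}_{n\in\mb Z_+}$. Similarly, the hypothesis $\tint f_0 \in C_{-1}(H_0)$ agrees with the membership condition used in Lemma \ref{20130222:lem}(b) via the identification \eqref{20130222:eq9}, and the brackets $\{\cdot\,,\,\cdot\}_0$, $\{\cdot\,,\,\cdot\}_1$ on $\mc V/\partial\mc V$ defined in \eqref{20130222:eq7} coincide with the abstract brackets from \eqref{20130222:eq1} associated to $H_0$ and $H_1$, respectively.

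I then apply Lemma \ref{20130222:lem}(b) verbatim to conclude that $\{\tint f_m,\tint g_n\}_0 = \{\tint f_m,\tint g_n\}_1 = 0$ for all admissible $m$ and $n$. Note that Lemma \ref{20130222:lem}(b) does not require any compatibility or squared-to-zero condition on $H_0,H_1$, so the compatibility assumption on the Poisson structures is actually stronger than what is needed for the conclusion; it is assumed here only because it makes sense of the Lenard-Magri hypothesis in its usual bi-Hamiltonian formulation.

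There is essentially no obstacle beyond checking that the translations between the abstract setup of Section \ref{sec:2} and the concrete setup of variational polyvector fields in Section \ref{sec:3} are the ones already established; all the real work was done in the proof of Lemma \ref{20130222:lem}(b).
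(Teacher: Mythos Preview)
Your proposal is correct and follows exactly the same approach as the paper, which simply states that the corollary is a special case of Lemma~\ref{20130222:lem}(b). You have merely made explicit the dictionary between the abstract setup of Section~\ref{sec:2} and the variational polyvector field setting of Section~\ref{sec:3}, which the paper leaves implicit.
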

\begin{proof}
It is a special case of Lemma \ref{20130222:lem}(b).
\end{proof}

\section{The bi-Poisson structure $(H_0,H_1)$}\label{sec:4}

Consider the following algebra of differential functions in two variables $u,v$:
\begin{equation}\label{20130307:eq5}
\mc V=\mb F[u,v^{\pm1},u',v',u'',v'',\dots]\,.
\end{equation}
It is contained in the normal extension $\tilde{\mc V}=\mc V[\log v]$, 
see \cite[Ex.4.5]{DSK13a}.
\begin{theorem}\label{20130305:thm}
The following is a compatible pair of Poisson structures 
$(H_0,H_1)\in\Mat_{2\times2}\mc V[\partial]$:
\begin{equation}\label{20130307:eq1}
H_0(\partial)=
\left(\begin{array}{cc}
\partial^3+\partial\circ u+u\partial & v\partial \\
\partial\circ v & 0
\end{array}\right)
\,,
\end{equation}
and
\begin{equation}\label{20130307:eq2}
H_1(\partial)=
\left(\begin{array}{cc}
0 & \partial\circ\frac1{v^2} \\
\frac1{v^2}\partial & -\frac1{v^2}Q(\partial)\circ\frac1{v^2}
\end{array}\right)\,,
\end{equation}
where
$$
Q(\partial)=\partial^5+3\partial\circ(\partial\circ u+u\partial)\partial
+2(\partial^3\circ u+u\partial^3)+8(\partial\circ u^2+u^2\partial)\,.
$$
\end{theorem}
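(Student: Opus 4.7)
The plan is to verify the three identities $[H_0,H_0]=[H_1,H_1]=[H_0,H_1]=0$ in a unified way, by showing that the pencil $H_\epsilon=H_0+\epsilon H_1$ is a Poisson structure for every $\epsilon\in\mb F$; the three identities then appear, respectively, as the coefficients of $\epsilon^0$, $\epsilon^2$, and $\epsilon^1$ in the expansion $[H_\epsilon,H_\epsilon]=0$. Via the standard correspondence between skew-adjoint Hamiltonian operators in $\Mat_{2\times 2}\mc V[\partial]$ and $\lambda$-brackets on $\mc V$, the vanishing of $[H_\epsilon,H_\epsilon]$ is equivalent to the Poisson vertex algebra Jacobi identity for the combined bracket $\{\cdot_\lambda\cdot\}_\epsilon=\{\cdot_\lambda\cdot\}_0+\epsilon\{\cdot_\lambda\cdot\}_1$, which by sesquilinearity and the Leibniz rule reduces, on each of the four ordered triples $(u,u,u),(u,u,v),(u,v,v),(v,v,v)$ of generators, to a single polynomial identity in $\mc V[\lambda,\mu,\epsilon]$.

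I would then dispatch the three $\epsilon$-degrees in order of increasing difficulty. The $\epsilon^0$ term, which is $[H_0,H_0]=0$, corresponds to the brackets $\{u_\lambda u\}_0=(\partial+2\lambda)u+\lambda^3$, $\{u_\lambda v\}_0=(\partial+\lambda)v$, $\{v_\lambda v\}_0=0$, and realises the classical semidirect product of the Virasoro PVA with a commutative primary current of conformal weight one; its Jacobi identity is standard and I would simply quote it.

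For the $\epsilon^2$ term $[H_1,H_1]=0$, only the triples $(u,v,v)$ and $(v,v,v)$ produce a non-trivial identity, since $\{u_\lambda u\}_1=0$; the non-trivial brackets $\{u_\lambda v\}_1$ and $\{v_\lambda v\}_1=-\frac1{v^2}Q(\lambda)\frac1{v^2}$ both carry the rational factor $1/v^2$, and the latter also involves the fifth-order operator $Q$. I would tame the denominators by passing to the normal extension $\tilde{\mc V}=\mc V[\log v]$ and substituting $v=e^\phi$, so that $1/v^2=e^{-2\phi}$ and all of its derivatives become polynomial in the $\phi^{(n)}$; the Jacobi identity then becomes a polynomial identity in $\tilde{\mc V}[\lambda,\mu]$, which is verified by matching coefficients in $\lambda$ and $\mu$.

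The compatibility $[H_0,H_1]=0$ at order $\epsilon^1$ is where I expect the main obstacle to lie. It mixes the third-order Virasoro part of $H_0$ with the fifth-order $Q(\partial)$ and the rational $1/v^2$ coefficients inside $H_1$; the triples $(u,u,v)$, $(u,v,v)$, $(v,v,v)$ all contribute, with $(u,v,v)$ and $(v,v,v)$ the most intricate since both $H_0$- and $H_1$-brackets are then non-trivial and large collections of terms must cancel identically only after the derivatives of $1/v^2$ and the internal structure of $Q$ collaborate. After the same substitution $v=e^\phi$, the compatibility Jacobi becomes a polynomial identity in $\tilde{\mc V}[\lambda,\mu]$ that I would organise by total degree in $\lambda$ and $\mu$, splitting it into finitely many scalar identities in $\tilde{\mc V}$. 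These are routine but lengthy, and in practice I would confirm them with a symbolic algebra system.
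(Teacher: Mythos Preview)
Your proposal is correct and follows essentially the same route as the paper: reduce everything to the PVA Jacobi identity (what the paper refers to as ``equation (1.49) from [BDSK09]'') and verify it by a computer-assisted calculation, quoting the literature for the fact that $H_0$ is Poisson. Your plan is in fact more detailed than the paper's own proof, which simply records that $H_0$ is well known and that the Jacobi identity for $H_1$ and the compatibility with $H_0$ were checked by computer; your pencil formulation and the substitution $v=e^{\phi}$ are reasonable organisational devices for carrying that computation out.
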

\begin{proof}
$H_0$ is a well known Poisson structure, see e.g. \cite{Ito82,Dor93}.
A simple proof of this fact can be found in \cite{BDSK09}.
$H_1$ is obviously skew-adjoint. 
The proof that $H_1$ satisfies Jacobi identity and is compatible with $H_0$
is a rather lengthy computation (one has to verify equation (1.49) from \cite{BDSK09}).
This has been checked with the use of the computer.
\end{proof}

\section{Casimirs for $H_0$ and $H_1$}\label{sec:5}

In this paper 
we will apply the Lenard-Magri scheme for the bi-Poisson structure $(H_0,H_1)$ 
to find integrable hierarchies of bi-Hamiltonian equations.
As explained in Section 1, 
in order to do so it is convenient to find the Casimir elements 
for $H_0$ and $H_1$.
\begin{proposition}\label{20130305:prop}
\begin{enumerate}[(a)]
\item
The kernel of $H_0(\partial)$ is spanned by
\begin{equation}\label{20130307:eq10}
\xi^{0,0}=\left(\begin{array}{c}
0 \\ 1
\end{array}\right)
\,\,,\,\,\,\,
\xi^{0,1}=\left(\begin{array}{c}
\frac1v \\ -\frac u{v^2}-\frac32\frac{(v')^2}{v^4}+\frac{v''}{v^3}
\end{array}\right)
\,.
\end{equation}
\item
We have
$\xi^{0,0}=\delta(\tint h^{0,0})$ and $\xi^{0,1}=\delta(\tint h^{0,1})$,
where
$$
\tint h^{0,0}=\tint v
\,\,\text{ and }\,\,
\tint h^{0,1}=\tint \Big(\frac uv-\frac12\frac{(v')^2}{v^3}\Big)\,.
$$
\item
The matrix differential operator $H_0(\partial)$ is strongly skew-adjoint.
\item
The kernel of $H_1(\partial)$ is spanned by
\begin{equation}\label{20130307:eq11}
\xi^{1,0}=\left(\begin{array}{c}
1 \\ 0
\end{array}\right)
\,\,,\,\,\,\,
\xi^{1,1}=\left(\begin{array}{c}
u''+4u^2 \\ \frac{v^2}2
\end{array}\right)
\,.
\end{equation}
\item
We have
$\xi^{1,0}=\delta(\tint h^{1,0})$ and $\xi^{1,1}=\delta(\tint h^{1,1})$,
where
$$
\tint h^{1,0}=\tint u
\,\,\text{ and }\,\,
\tint h^{1,1}=\tint\Big(\frac12uu''+\frac43 u^3+\frac16v^3\Big)\,.
$$
\item
The matrix differential operator $H_1(\partial)$ is strongly skew-adjoint.
\end{enumerate}
\end{proposition}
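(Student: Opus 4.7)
The proof splits into three steps. For (a) and (d), I would solve $H_i F = 0$ directly in $\mc V^2$. The second row of $H_0 F = 0$ reads $(vF_1)' = 0$, so $F_1 = c_1/v$ with $c_1 \in \mb F$; substituting into the first row and integrating once gives $F_2 = c_1 \xi^{0,1}_2 + c_2$ with $c_2 \in \mb F$, producing the two displayed generators. For $H_1 F = 0$ the first row $(F_2/v^2)' = 0$ yields $F_2 = c_1 v^2$, and the second row then reads $F_1' = Q(\partial)(c_1) = c_1(2u''+8u^2)'$, integrating to $F_1 = c_1(2u''+8u^2) + c_2$. Parts (b) and (e) are routine applications of the formula \eqref{20130222:eq6} to the four listed functionals.

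The substance lies in (c) and (f). Condition (i) of Definition \ref{20130222:def3} is immediate from (b) and (e). For (c)(ii), given $F \in \mc V^2$ with $\tint F_2 = 0$ and $\tint(F_1/v + F_2 \xi^{0,1}_2) = 0$, I would write $F_2 = G'$ for some $G \in \mc V$ and set $P_1 = G/v$, which solves $(vP_1)' = F_2$. The first row then becomes $vP_2' = G_1$ with $G_1 := F_1 - P_1''' - 2uP_1' - u'P_1$, and $P_2 \in \mc V$ exists iff $\tint G_1/v = 0$. My plan is to establish this by three integrations by parts on $\tint P_1'''/v$ and one on $\tint 2uP_1'/v$, then invoke the rearranged identity
$$
v'''/v^2 - 6v'v''/v^3 + 6(v')^3/v^4 = u'/v - 2uv'/v^2 + v\partial\xi^{0,1}_2,
$$
which is just $(H_0\xi^{0,1})_1 = 0$. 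The $\tint u'P_1/v$ and $\tint uv'P_1/v^2$ terms cancel identically, collapsing $\tint(P_1'''+2uP_1'+u'P_1)/v$ to $\tint vP_1\,\partial\xi^{0,1}_2 = \tint G\,\partial\xi^{0,1}_2$; combined with $\tint F_1/v = \tint G\,\partial\xi^{0,1}_2$ (the second orthogonality condition after one integration by parts), this yields $\tint G_1/v = 0$.

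The argument for (f)(ii) is symmetric, with components swapped. Given $F \perp \ker H_1$, the condition $\tint F_1 = 0$ lets me take $F_1 = \tilde G'$ and $P_2 = v^2\tilde G$, which solves the first row of $H_1 P = F$. The second row reduces to $P_1' = v^2 F_2 + Q(\partial)\tilde G$, whose solvability demands $\tint[v^2 F_2 + Q(\partial)\tilde G] = 0$. Repeated integration by parts against $\tilde G$ and the explicit form of $Q(\partial)$ reduces $\tint Q(\partial)\tilde G$ to $2\tint(u''+4u^2)\tilde G' = 2\tint(u''+4u^2)F_1$, after which the needed identity is exactly $2\tint F\cdot\xi^{1,1} = 0$. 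The principal obstacle in both (c)(ii) and (f)(ii) is bookkeeping: several integrations by parts must align so that the kernel identity triggers the necessary cancellations, but no deeper difficulty is expected once the calculations are organized around the equations $(H_i\xi^{i,1})_1 = 0$.
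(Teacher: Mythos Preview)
Your proposal is correct and, for the substantive parts (c) and (f), follows exactly the paper's argument: both write the orthogonality to $\xi^{\epsilon,0}$ as an exactness condition determining one component of the preimage, then show via integration by parts and the kernel identity $(H_\epsilon\xi^{\epsilon,1})_1=0$ that orthogonality to $\xi^{\epsilon,1}$ forces the remaining obstruction to vanish. The only difference is in (a) and (d): the paper bounds $\dim\ker H_i\leq 2$ a priori using the degree of the Dieudonn\'e determinant and then merely \emph{verifies} that the displayed vectors lie in the kernel, whereas you solve the system $H_iF=0$ directly---this is equally valid (it uses only that $\ker\partial=\mb F$ in $\mc V$) and in fact more self-contained, though the paper's route avoids the explicit integration.
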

\begin{proof}
Recall that the dimension (over the field of constants) 
of the kernel of a non-degenerate matrix differential operator
is at most the degree of its Dieudonn\'e determinant, see e.g. \cite{DSK13}.
Clearly, the Dieudonn\'e determinants of both $H_0$ and $H_1$ 
have degree 2, therefore their kernels have dimensions (over $\mb F$) at most 2.
On the other hand, obviously 
$\xi^{0,0}\in\ker H_0(\partial)$,
and $\xi^{1,0}\in\ker H_1(\partial)$.
Moreover, the following straightforward identities
\begin{equation}\label{20130305:eq1}
\begin{array}{l}
\displaystyle{
(\partial^3+\partial\circ u+u\partial)\frac1v
+v\partial\big(-\frac u{v^2}-\frac32\frac{(v')^2}{v^4}+\frac{v''}{v^3}\big)=0
\,,}\\
\displaystyle{
\frac{1}{v^2}\partial(u''+4u^2)-\frac1{v^2}Q(\partial)\frac12=0
\,,}
\end{array}
\end{equation}
imply respectively that
$\xi^{0,1}\in\ker H_0(\partial)$,
and $\xi^{1,1}\in\ker H_1(\partial)$,
proving parts (a) and (d).
Parts (b) and (e) follow by straightforward computations.
Note that parts (b) and (e) exactly say, respectively, that the operators $H_0$ and $H_1$ 
satisfy condition (i) of Definition \ref{20130222:def3} of strong skew-adjointness.
We are left to prove condition (ii) for both $H_0$ and $H_1$.
Take $P_0=\Big(\begin{array}{c} p_0 \\ q_0 \end{array}\Big)\perp\ker H_0$.
Since $\tint P_0\cdot\xi^{0,0}=0$ (and since $v$ is invertible in $\mc V$), we have that 
$q_0=(v\alpha_0)'$, for some $\alpha_0\in\mc V$.
Denote $r_0=\frac1v(p_0-\alpha_0'''-(u\alpha_0)'-u\alpha_0')\in\mc V$,
so that $p_0=(\partial^3+\partial\circ u+u\partial)\alpha_0+vr_0$.
The condition $\tint P_0\cdot\xi^{0,1}=0$ then reads
$$
\tint\Big(
r_0+\frac1v(\partial^3+\partial\circ u+u\partial)\alpha_0
+\big(-\frac u{v^2}-\frac32\frac{(v')^2}{v^4}+\frac{v''}{v^3}\big)(v\alpha_0)'\Big)=0\,.
$$
After integration by parts,
the last two terms under integration cancel by the first identity in \eqref{20130305:eq1},
hence the above equation implies that $r_0=\beta_0'\in\partial\mc V$.
In conclusion, $P_0=H_0(\partial)\Big(\begin{array}{c} \alpha_0 \\ \beta_0 \end{array}\Big)$,
proving condition (ii) for $H_0$.
Similarly,
take $P_1=\Big(\begin{array}{c} p_1 \\ q_1 \end{array}\Big)\perp\ker H_1$.
Since $\tint P_1\cdot\xi^{1,0}=0$ (and since $v$ is invertible in $\mc V$), we have that 
$p_1=\big(\frac{\beta_1}{v^2}\big)'$, for some $\beta_1\in\mc V$.
Denote $r_1=v^2q_1+Q(\partial)\frac{\beta_1}{v^2}\in\mc V$,
so that $q_1=\frac{r_1}{v^2}-\frac{1}{v^2}Q(\partial)\frac{\beta_1}{v^2}$.
The condition $\tint P_1\cdot\xi^{1,1}=0$ then reads
$$
\tint\Big(
\big(\frac{\beta}{v^2}\big)'(u''+4u^2)-\frac{1}{2}Q(\partial)\frac{\beta_1}{v^2}
+\frac{r_1}{2}
\Big)=0\,.
$$
After integration by parts,
the first two terms under integration cancel by the second identity in \eqref{20130305:eq1},
hence the above equation implies that $r_1=\alpha_1'\in\partial\mc V$.
In conclusion, $P_1=H_1(\partial)\Big(\begin{array}{c} \alpha_1 \\ \beta_1 \end{array}\Big)$,
proving condition (ii) for $H_1$.
\end{proof}
By Proposition \ref{20130305:prop}(b)-(e), the space of Casimir elements 
for $H_0$ and $H_1$ are respectively
$$
C_{-1}(H_0)=\Span{}_{\mb F}\big\{\tint h^{0,0},\tint h^{0,1}\big\}
\,\text{ and }\,
C_{-1}(H_1)=\Span{}_{\mb F}\big\{\tint h^{1,0},\tint h^{1,1}\big\}\,.
$$
Let, as before, $\{\cdot\,,\,\cdot\}_0$ and $\{\cdot\,,\,\cdot\}_1$ be the Poisson brackets
\eqref{20130222:eq7} associated to the Poisson structures $H_0$ and $H_1$ respectively.
\begin{proposition}\label{20130305:prop2}
The spaces $C_{-1}(H_0)$ and $C_{-1}(H_1)$
are abelian subalgebras with respect to both Poisson brackets 
$\{\cdot\,,\,\cdot\}_0$ and $\{\cdot\,,\,\cdot\}_1$.
\end{proposition}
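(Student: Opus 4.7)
My plan is to split the proposition into four sub-claims, one for each pairing of a Casimir space with a Poisson bracket. Two of these are immediate from \eqref{20130222:eq7}: whenever $\tint f\in C_{-1}(H_i)$ we have $H_i(\partial)\delta f=0$, hence $\{\tint f,\tint g\}_i = \tint\delta g\cdot H_i(\partial)\delta f = 0$ for every $\tint g$. So $\{\cdot\,,\,\cdot\}_0$ vanishes identically on $C_{-1}(H_0)$ and $\{\cdot\,,\,\cdot\}_1$ on $C_{-1}(H_1)$ with no further argument needed. The substance of the claim lies in the ``crossed'' pairings: I must show that $\{\cdot\,,\,\cdot\}_1$ vanishes on $C_{-1}(H_0)$ and that $\{\cdot\,,\,\cdot\}_0$ vanishes on $C_{-1}(H_1)$.

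For each crossed pairing, since the Casimir space is two-dimensional and the bracket is bilinear and skew-symmetric, the check collapses to a single bracket, namely $\{\tint h^{0,0},\tint h^{0,1}\}_1$ and $\{\tint h^{1,0},\tint h^{1,1}\}_0$. By Lemma \ref{20130222:lem}(d), applied once as stated and once with the roles of $H_0$ and $H_1$ exchanged (legitimate because the hypothesis $[H_0,H_1]=0$ is symmetric in the two operators), these two brackets lie respectively in $C_{-1}(H_0)$ and $C_{-1}(H_1)$. So each bracket is a priori some linear combination of the Casimir generators, and only the specific coefficients remain to be pinned down.

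To pin them down I would introduce the scaling grading $\deg u = \deg v = 2$, $\deg\partial = 1$. A direct inspection of \eqref{20130307:eq1}--\eqref{20130307:eq2} shows that every entry of $H_0$ is a homogeneous differential operator of operator degree $3$, and every entry of $H_1$ is homogeneous of operator degree $-3$ (the key observation here being that $Q(\partial)$ is pure of operator degree $5$). Both Poisson brackets are therefore graded: for homogeneous $\tint f,\tint g$ of degrees $d_f,d_g$, accounting for the $-\deg u_i = -2$ shift introduced by each variational derivative, one obtains $\deg\{\tint f,\tint g\}_0 = d_f+d_g-1$ and $\deg\{\tint f,\tint g\}_1 = d_f+d_g-7$. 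Reading off $\deg\tint h^{0,0}=2$, $\deg\tint h^{0,1}=0$, $\deg\tint h^{1,0}=2$, $\deg\tint h^{1,1}=6$, the two brackets in question lie in degrees $-5$ and $7$ respectively. But the graded decomposition of $C_{-1}(H_0)$ has nonzero pieces only in degrees $0$ and $2$, and that of $C_{-1}(H_1)$ only in degrees $2$ and $6$, so both brackets are forced to vanish.

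The main obstacle is the bookkeeping in the grading step: one must verify that $H_0$ and $H_1$ are genuinely homogeneous matrix differential operators of the claimed degrees, and then combine the operator shift correctly with the shifts coming from the two variational derivatives. Once this is set up, the conclusion reduces to a pure degree comparison, and no explicit computation of the brackets themselves is required.
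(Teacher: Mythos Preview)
Your argument is correct, and it takes a genuinely different route from the paper's. After the same reduction to the two cross-brackets $\{\tint h^{0,0},\tint h^{0,1}\}_1$ and $\{\tint h^{1,0},\tint h^{1,1}\}_0$, the paper simply computes: for the second bracket it writes out $\xi^{1,0}\cdot H_0(\partial)\xi^{1,1}$ explicitly and exhibits it as a total derivative, and for the first it appeals to ``a similar, but longer computation''. You instead invoke Lemma~\ref{20130222:lem}(d) (which the paper proves but does not use at this point) to place each cross-bracket inside the opposite Casimir space, and then kill it by a scaling-degree mismatch. Your grading bookkeeping checks out: $H_0$ and $H_1$ are indeed homogeneous of operator degrees $3$ and $-3$ under $\deg u=\deg v=2$, $\deg\partial=1$, giving the bracket shifts $-1$ and $-7$; the resulting degrees $-5$ and $7$ miss the degrees $\{0,2\}$ and $\{2,6\}$ (and $0$, if one includes $\tint\mb F$) occupied by $C_{-1}(H_0)$ and $C_{-1}(H_1)$. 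What your approach buys is that it replaces two explicit total-derivative verifications --- one of which the paper does not even write down --- by a short structural argument; the price is having to set up and verify the homogeneity of $H_0$, $H_1$, $\delta$, and the Casimir generators, which is routine but must be done carefully.
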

\begin{proof}
By definition of Casimir elements,
the space $C_{-1}(H_0)$ is in the kernel of the bracket $\{\cdot\,,\,\cdot\}_0$,
and $C_{-1}(H_1)$ is in the kernel of the bracket $\{\cdot\,,\,\cdot\}_1$.
Hence, we only need to prove 
that $C_{-1}(H_0)$ is an abelian subalgebra w.r.t. $\{\cdot\,,\,\cdot\}_1$,
namely, by \eqref{20130222:eq7} and Proposition \eqref{20130305:prop}(b), that
\begin{equation}\label{20130305:eq2}
\tint \xi^{0,0}\cdot H_1(\partial)\xi^{0,1}=0
\,,
\end{equation}
and that $C_{-1}(H_1)$ is an abelian subalgebra w.r.t. $\{\cdot\,,\,\cdot\}_0$,
namely
\begin{equation}\label{20130305:eq3}
\tint \xi^{1,0}\cdot H_0(\partial)\xi^{1,1}=0
\,.
\end{equation}
We have
$$
\begin{array}{l}
\displaystyle{
\vphantom{\Big(}
\xi^{1,0}\cdot H_0(\partial)\xi^{1,1}
=
(\partial^3+\partial\circ u+u\partial)(u''+4u^2)+\frac12 v\partial v^2
} \\
\displaystyle{
\vphantom{\Big(}
=
(u''+4u^2)'''
+\Big(2uu''-\frac12(u')^2+\frac{20}{3}u^3\Big)'
+\frac13 (v^3)'
\,,
}
\end{array}
$$
hence \eqref{20130305:eq3} holds.
A similar, but longer computation, shows that \eqref{20130305:eq2} holds as well.
\end{proof}

\section{The four integrable bi-Hamiltonian equations of lowest order}\label{sec:6}

Let us start computing the first Hamiltonian equations associated to each element
in the kernel of $H_0$ and $H_1$.
Starting with $\xi^{0,0}\in\ker (H_0)$ we get the Hamiltonian vector field
$P^{0,0}=H_1(\partial)\xi^{0,0}$,
and the corresponding Hamiltonian equation (of order 5):
\begin{equation}\label{20130306:eq1}
\begin{array}{rcl}
\frac{du}{dt}
&=&
\Big(\frac1{v^2}\Big)' \\
\frac{dv}{dt}
&=&
-\frac1{v^2}\Big(\frac1{v^2}\Big)^{(5)}
+3\frac1{v^2}\Big(u\Big(\frac1{v^2}\Big)'\Big)''
+3\frac1{v^2}\Big(u\Big(\frac1{v^2}\Big)''\Big)'
\\
&& +2\frac1{v^2}\Big(\frac{u}{v^2}\Big)'''
+2\frac{u}{v^2}\Big(\frac1{v^2}\Big)'''
+8\frac1{v^2}\Big(\frac{u^2}{v^2}\Big)'
+8\frac{u^2}{v^2}\Big(\frac1{v^2}\Big)'
\,.
\end{array}
\end{equation}
Starting with $\xi^{0,1}\in\ker (H_0)$ we get the Hamiltonian vector field
$P^{0,1}=H_1(\partial)\xi^{0,1}$,
and the corresponding Hamiltonian equation (of order 7):
\begin{equation}\label{20130306:eq2}
\begin{array}{rcl}
\frac{du}{dt}
&=&
\Big(-\frac{u}{v^4}+\frac{v''}{v^5}-\frac{3}{2}\frac{(v')^2}{v^6}\Big)'
\\
\frac{dv}{dt}
&=&
-\frac{v'}{v^4}
-\frac1{v^2}
\big(\partial^5+3\partial^2\circ u\partial+3\partial\circ u\partial^2
+2\partial^3\circ u+2u\partial^3
\\
&& +8\partial\circ u^2+8u^2\partial\big)
\Big(-\frac{u}{v^4}+\frac{v''}{v^5}-\frac{3}{2}\frac{(v')^2}{v^6}\Big)
\,.
\end{array}
\end{equation}
Starting with $\xi^{1,0}\in\ker (H_1)$ we get the Hamiltonian vector field
$P^{1,0}=H_0(\partial)\xi^{1,0}$,
and the corresponding Hamiltonian equation:
\begin{equation}\label{20130306:eq3}
\begin{array}{rcl}
\frac{du}{dt} &=& u'
\\
\frac{dv}{dt} &=& v'
\,.
\end{array}
\end{equation}
Finally, starting with $\xi^{1,1}\in\ker (H_0)$ we get the Hamiltonian vector field
$P^{1,1}=H_0(\partial)\xi^{1,1}$,
and the corresponding Hamiltonian equation (of order 5):
\begin{equation}\label{20130306:eq4}
\begin{array}{rcl}
\frac{du}{dt}
&=&
u^{(5)}+10uu'''+25u'u''+20u^2u'+v^2v' \\
\frac{dv}{dt}
&=&
u'''v+u''v'+8uu'v+4u^2v'
\,.
\end{array}
\end{equation}

We want to prove that, for $\epsilon=0,1$, 
each element $\xi^{\epsilon,\alpha},\alpha=0,1$ in the kernel of $H_\epsilon$
produces an infinite Lenard-Magri scheme
starting with $\xi^{\epsilon,\alpha}_0=\xi^{\epsilon,\alpha}$:
\begin{equation}\label{maxi}
\UseTips
\xymatrix{
0 \ar@{<-}[dr]^{H_\epsilon} & & 
P^{\epsilon,\alpha}_{0} \ar@{<-}[dl]_{H_{1-\epsilon}} \ar@{<-}[dr]^{H_\epsilon} & &
P^{\epsilon,\alpha}_{1} \ar@{<-}[dl]_{H_{1-\epsilon}} \ar@{<-}[dr]^{H_\epsilon} & \dots & \in\mc V^2
\\
 & \xi^{\epsilon,\alpha}_{0} & & \xi^{\epsilon,\alpha}_{1} & & \dots & \in\mc V^2
\\
 &\tint h^{\epsilon,\alpha}_{0}\ar@{->}[u]^{\delta}
&&\tint h^{\epsilon,\alpha}_{1}\ar@{->}[u]^{\delta}&& \dots
& \in\tilde{\mc V}/{\partial\tilde{\mc V}}
}
\end{equation}
%
This will easily imply that all 
by-Hamiltonian equations \eqref{20130306:eq1}, \eqref{20130306:eq2}, \eqref{20130306:eq4}
and \eqref{20130306:eq3b}
are integrable, and they are compatible with each other.
Namely, we will prove the following result.
\begin{theorem}\label{20130306:thm}
\begin{enumerate}[(a)]
\item
For $\epsilon,\alpha\in\{0,1\}$ there is a sequence
$\{\tint h^{\epsilon,\alpha}_n\}_{n\in\mb Z_+}\subset\tilde{\mc V}/\partial\tilde{\mc V}$,
such that $\delta(\tint h^{\epsilon,\alpha}_0)=\xi^{\epsilon,\alpha}$ 
and $\delta(\tint h^{\epsilon,\alpha}_n)\in\mc V^2$ for every $n\in\mb Z_+$,
satisfying the Lenard-Magri recurrence relations \eqref{maxi},
i.e.
\begin{equation}\label{20130306:eq5}
H_{1-\epsilon}(\partial)\delta (\tint h^{\epsilon,\alpha}_n)
=H_\epsilon(\partial)\delta (\tint h^{\epsilon,\alpha}_{n+1})
=:\,P^{\epsilon,\alpha}_n
\,.
\end{equation}
\item
The four Lenard-magri schemes are compatible, in the sense that
$$
\begin{array}{l}
\displaystyle{
\vphantom{\Big(}
\big\{\tint h^{\epsilon,\alpha}_m,\tint h^{\delta,\beta}_n\big\}_\zeta=0
\,\,\text{ for all } \epsilon,\delta,\zeta,\alpha,\beta=0,1,\,m,n\in\mb Z_+\,,
} \\
\displaystyle{
\vphantom{\Big(}
[P^{\epsilon,\alpha}_m,P^{\delta,\beta}_n]=0
\,\,\text{ for all } \epsilon,\delta,\alpha,\beta=0,1,\,m,n\in\mb Z_+\,.
}
\end{array}
$$
\item
The differential orders of the higher symmetries 
$P^{\epsilon,\alpha}_n=H_\epsilon(\partial)\delta (\tint h^{\epsilon,\alpha}_n)$ 
tend to infinity as $n\to\infty$.
\end{enumerate}
\end{theorem}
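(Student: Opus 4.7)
For part~(a), my plan is to induct on $n$, constructing each $\tint h^{\epsilon,\alpha}_{n+1}$ from $\tint h^{\epsilon,\alpha}_0,\dots,\tint h^{\epsilon,\alpha}_n$ via the three-stage procedure outlined in Section~\ref{sec:3}: first, Proposition~\ref{20130222:prop3} (applied with $H_\epsilon$ playing the role of ``$H_0$''; the strong skew-adjointness hypothesis is supplied by Proposition~\ref{20130305:prop}(c),(f)) produces $F\in\mc V^2$ with $H_\epsilon(\partial)F=H_{1-\epsilon}(\partial)\delta h^{\epsilon,\alpha}_n$; second, Theorem~\ref{20130222:thm1}, whose non-degeneracy hypothesis is met because the Dieudonn\'e determinants of $H_0,H_1$ are nonzero (cf.\ Proposition~\ref{20130305:prop}), certifies that $F$ is closed; third, Theorem~\ref{20130222:thm2} integrates $F=\delta(\tint h^{\epsilon,\alpha}_{n+1})$ in the normal extension $\tilde{\mc V}=\mc V[\log v]$. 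At the base $n=0$, Theorem~\ref{20130222:thm1} is invoked with the pair $(\tint h_0,\tint h_1)=(0,\tint h^{\epsilon,\alpha}_0)$, which is legitimate because $\tint h^{\epsilon,\alpha}_0\in C_{-1}(H_\epsilon)$; for $n\ge 1$, the pair $(\tint h^{\epsilon,\alpha}_{n-1},\tint h^{\epsilon,\alpha}_n)$ works by the inductive hypothesis.

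The substantive input required for Proposition~\ref{20130222:prop3} is that $\{\tint h^{\epsilon,\alpha}_n,\tint g\}_{1-\epsilon}=0$ for every $\tint g\in C_{-1}(H_\epsilon)$. For $n=0$ this is exactly Proposition~\ref{20130305:prop2}. For $n\ge 1$, Corollary~\ref{20130222:cor2} yields only the weaker statement that the bracket is itself a Casimir of $H_\epsilon$, so by the two-dimensionality of $C_{-1}(H_\epsilon)$ (Proposition~\ref{20130305:prop}) it equals $\lambda_0\tint h^{\epsilon,0}+\lambda_1\tint h^{\epsilon,1}$ for some scalars $\lambda_0,\lambda_1\in\mb F$. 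I expect the main obstacle to be showing $\lambda_0=\lambda_1=0$---this is precisely the source of the ``almost always'' disclaimer in the Introduction. The strategy is to track inductively the leading differential order and top coefficient of $\delta h^{\epsilon,\alpha}_n$, compute the top-degree contribution of $\tint\delta g\cdot H_{1-\epsilon}(\partial)\delta h^{\epsilon,\alpha}_n$ modulo $\partial\mc V$, and compare with the explicit low-order Casimir densities of Proposition~\ref{20130305:prop}(b),(e); once the order of $\tint h^{\epsilon,\alpha}_n$ exceeds that of every Casimir representative, the coefficients $\lambda_0,\lambda_1$ are forced to vanish on order grounds.

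For part~(b), when $\epsilon=\delta$ both sequences satisfy a common Lenard-Magri recursion for the ordered pair $(H_\epsilon,H_{1-\epsilon})$ and one of them starts in $C_{-1}(H_\epsilon)$, so Corollary~\ref{20130222:cor4} applies directly. When $\epsilon\ne\delta$, say $\epsilon=0$ and $\delta=1$, I would induct on $n$: the case $n=0$ gives $\{\tint h^{0,\alpha}_m,\tint h^{1,\beta}_0\}_1=0$ because $\tint h^{1,\beta}_0\in C_{-1}(H_1)$, and $\{\tint h^{0,\alpha}_m,\tint h^{1,\beta}_0\}_0=0$ from either $\tint h^{0,\alpha}_0\in C_{-1}(H_0)$ (for $m=0$) or the relation $H_0\delta h^{0,\alpha}_m=H_1\delta h^{0,\alpha}_{m-1}$ combined with $H_1\delta h^{1,\beta}_0=0$ (for $m\ge 1$); the inductive step is the shift-of-index trick used in the proof of Lemma~\ref{20130222:lem}(b), applied once in each direction to exchange the level of $\tint h^{1,\beta}_n$ for that of $\tint h^{0,\alpha}_m$. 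Commutativity of the Hamiltonian vector fields $P^{\epsilon,\alpha}_n$ then follows from Lemma~\ref{20130222:lem}(c) applied to $H_\epsilon$.

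For part~(c), the inductive leading-order analysis of $\delta h^{\epsilon,\alpha}_n$ introduced for part~(a) directly yields growth of $\ord P^{\epsilon,\alpha}_n$ via the identity $P^{\epsilon,\alpha}_n=H_\epsilon(\partial)\delta h^{\epsilon,\alpha}_n=H_{1-\epsilon}(\partial)\delta h^{\epsilon,\alpha}_{n-1}$: one verifies that the top-order symbol of $\delta h^{\epsilon,\alpha}_n$ is nonzero at every step and that its differential order strictly increases with $n$, which together with the fixed positive order of $H_\epsilon$ forces $\ord P^{\epsilon,\alpha}_n\to\infty$.
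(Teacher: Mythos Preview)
Your overall architecture for part~(a) matches the paper's: induct on $n$, invoke strong skew-adjointness (Proposition~\ref{20130222:prop3}) to solve $H_\epsilon F=H_{1-\epsilon}\delta h^{\epsilon,\alpha}_n$, then close and integrate via Theorems~\ref{20130222:thm1} and~\ref{20130222:thm2}. You also correctly identify the crux: upgrading Corollary~\ref{20130222:cor2} from ``the bracket is a Casimir'' to ``the bracket is zero''.

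The genuine gap is in how you propose to do this. A differential-order comparison cannot force $\lambda_0=\lambda_1=0$. We already \emph{know} the variational derivative of the bracket lies in the two-dimensional space $\ker H_\epsilon$, whose elements have small differential order; so the ``top-degree contribution'' you want to compute is not a high-order term at all---all high-order terms cancel automatically, and no contradiction arises from order alone. The paper's argument uses a different invariant: the degree in $v$. One introduces the subalgebras $\mc V^+=\mb F[u,v,u',v',\dots]$ and $\mc V^-=\mb F[u,v^{-1},u',v',\dots]$, proves inductively (via the explicit recursions~\eqref{20130306:eq6}--\eqref{20130306:eq7} and an intersection lemma $\partial\mc V\cap\frac1{v^k}\mc V^-=\partial(\mb F v^{1-k}\oplus\frac1{v^k}\mc V^-)$) that the entries of $\xi^{0,\alpha}_n$ lie in $\frac1v\mc V^-$ and those of $\xi^{1,\alpha}_n$ in $\mc V^+$, and then observes that the variational derivative of the bracket lands in $\frac1{v^2}\mc V^-$ (resp.\ a positive-degree component of $\mc V^+$), which meets $\mb F\xi^{\epsilon,0}\oplus\mb F\xi^{\epsilon,1}$ only in $0$. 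This $v$-grading argument is the content of Lemmas~\ref{20130307:lem1b}, \ref{20130307:lem4}, and~\ref{20130307:lem3}, and is the piece your outline is missing.

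There is a second structural point. The paper does \emph{not} treat the four $(\epsilon,\alpha)$ uniformly. The $v$-grading argument in Lemma~\ref{20130307:lem3} is carried out only for $\alpha=1$ (testing against $\xi^{\epsilon,0}$; orthogonality to $\xi^{\epsilon,1}$ is automatic since that is the starting Casimir). This yields the two infinite sequences $\{\xi^{\epsilon,1}_n\}$. The $\alpha=0$ sequences are then built by a ladder argument that \emph{uses} the already infinite $\alpha=1$ sequence: one shows
\[
\tint \xi^{\epsilon,1}_0\cdot H_{1-\epsilon}\xi^{\epsilon,0}_N
=\tint \xi^{\epsilon,1}_1\cdot H_{1-\epsilon}\xi^{\epsilon,0}_{N-1}
=\cdots
=\tint \xi^{\epsilon,1}_{N+1}\cdot H_{1-\epsilon}\xi^{\epsilon,0}_{-1}=0,
\]
which needs $\xi^{\epsilon,1}_{N+1}$ to exist. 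Your uniform plan would require proving the analogue of Lemma~\ref{20130307:lem3} for $\alpha=0$ directly (testing against $\xi^{\epsilon,1}$), and the $v$-grading argument there is less clean; the bootstrap avoids this.

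Your plans for parts~(b) and~(c) are fine and essentially match the paper; in particular your handling of the cross case $\epsilon\ne\delta$ via the shift-of-index trick is correct (Lemma~\ref{20130222:lem}(b) as stated only covers recursions in the same direction, so some argument like yours is indeed needed).
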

%
%
\begin{remark}\label{20130308:rem1}
Proposition \ref{20130305:prop} gives the integrals of motion $\tint h^{\epsilon,\alpha}_n$
for $n=0$ and arbitrary $\epsilon,\alpha=0,1$.
One can use the Lenard-Magri relations \eqref{20130306:eq5}
to find recursively all other integrals of motions $\tint h^{\epsilon,\alpha}_n$
for arbitrary $n\in\mb Z$.
For example, we have
$$
\tint h^{1,0}_{1}=\tint\Big(\frac13 uv^3+\frac83 u^4+\frac12uu^{(4)}-6u(u')^2\Big)\,.
$$
The next higher symmetry is $P^{1,0}_1=H_0(\partial)\delta(\tint h^{1,0}_{1})$,
and the corresponding Hamiltonian equation (of order 7) is:
\begin{equation}\label{20130306:eq3b}
\begin{array}{rcl}
\frac{du}{dt}
&=&
(\partial^3+2u\partial+u')(u^{(4)}+12uu''+6(u')^2+\frac{32}{3}u^3+\frac13 v^3) +v\partial(uv^2)
\\
\frac{dv}{dt}
&=&
\partial(vu^{(4)}+12vuu''+6v(u')^2+\frac{32}{3}vu^3+\frac13 v^4)
\,.
\end{array}
\end{equation}
\end{remark}
%

\section{Proof of Theorem \ref{20130306:thm}}\label{sec:7}

Let $\epsilon,\alpha\in\{0,1\}$ and 
let 
\begin{equation}\label{20130307:eq4}
\bigg\{
\xi^{\epsilon,\alpha}_n
=\Big(\begin{array}{l} f^{\epsilon,\alpha}_{n} \\ g^{\epsilon,\alpha}_{n}\end{array}\Big)
\bigg\}_{n=0}^N\subset\mc V^2
\end{equation}
be a finite sequence satisfying
the Lenard-Magri recursive relations
\begin{equation}\label{20130307:eq3}
H_{1-\epsilon}(\partial)\xi^{\epsilon,\alpha}_{n-1}=H_\epsilon(\partial)\xi^{\epsilon,\alpha}_{n}
\end{equation}
for every $n=0,\dots,N$, where $\xi^{\epsilon,\alpha}_{-1}=0$ 
and $\xi^{\epsilon,\alpha}_0=\xi^{\epsilon,\alpha}$.
The main task in the proof of Theorem \ref{20130306:thm} is to show that,
when $\alpha=1$, this sequence can be extended by one step.
This is the content of Corollary \ref{20130307:cor} below,
which is a consequence of the following five lemmas.

\begin{lemma}\label{20130307:lem1a}
\begin{enumerate}[(a)]
\item
For $\epsilon=0$, the Lenard-Magri recursive relations \eqref{20130307:eq3}
translate in the following identities for the entries of $\xi^{0,\alpha}_n$ for $1\leq n\leq N$:
\begin{equation}\label{20130306:eq6}
\begin{array}{l}
\displaystyle{
(vf^{0,\alpha}_n)'
=
\frac1{v^2}(f^{0,\alpha}_{n-1})'-\frac1{v^2} Q(\partial)\frac{g^{0,\alpha}_{n-1}}{v^2}
\,,} \\
\displaystyle{
v(g^{0,\alpha}_{n})'
=
\Big(\frac{g^{0,\alpha}_{n-1}}{v^2}\Big)'-(f^{0,\alpha}_{n})'''-(uf^{0,\alpha}_{n})'-u(f^{0,\alpha}_{n})'
\,.}
\end{array}
\end{equation}
\item
For $\epsilon=1$, equations \eqref{20130307:eq3}
translate in the following identities for the entries of $\xi^{1,\alpha}_n$ for $1\leq n\leq N$:
\begin{equation}\label{20130306:eq7}
\begin{array}{l}
\displaystyle{
\Big(\frac{g^{1,\alpha}_{n}}{v^2}\Big)'
=
(f^{1,\alpha}_{n-1})'''+(uf^{1,\alpha}_{n-1})'+u(f^{1,\alpha}_{n-1})'+v(g^{1,\alpha}_{n-1})'
\,,} \\
\displaystyle{
(f^{1,\alpha}_{n})'
=
v^2(vf^{1,\alpha}_{n-1})'+Q(\partial)\Big(\frac{g^{1,\alpha}_{n}}{v^2}\Big)
\,.}
\end{array}
\end{equation}
\end{enumerate}
\end{lemma}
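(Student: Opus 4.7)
The plan is to prove this lemma by directly unpacking the matrix operator identity \eqref{20130307:eq3} into its two scalar components, using only the explicit formulas \eqref{20130307:eq1} and \eqref{20130307:eq2} for $H_0(\partial)$ and $H_1(\partial)$. No substantive idea is required; the content is essentially bookkeeping together with the convention that $\partial\circ u$ means ``apply $\partial$ after multiplication by $u$''.

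The first step is to record the action of each Poisson structure on a generic column vector $\binom{f}{g}\in\mc V^2$ by reading off the matrix entries:
$$
H_0(\partial)\binom{f}{g}
=\binom{f'''+(uf)'+uf'+vg'}{(vf)'},
\qquad
H_1(\partial)\binom{f}{g}
=\binom{(g/v^2)'}{\tfrac{1}{v^2}f'-\tfrac{1}{v^2}Q(\partial)(g/v^2)}.
$$

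For part (a), I would substitute into \eqref{20130307:eq3} with $\epsilon=0$, which reads $H_1(\partial)\xi^{0,\alpha}_{n-1}=H_0(\partial)\xi^{0,\alpha}_n$, and match entries. The equality of bottom components is immediately the first line of \eqref{20130306:eq6}. The equality of top components reads $(g^{0,\alpha}_{n-1}/v^2)'=(f^{0,\alpha}_n)'''+(uf^{0,\alpha}_n)'+u(f^{0,\alpha}_n)'+v(g^{0,\alpha}_n)'$, and isolating $v(g^{0,\alpha}_n)'$ on the left-hand side gives the second line of \eqref{20130306:eq6}. Part (b) is completely parallel, with the roles of $H_0$ and $H_1$ swapped: now \eqref{20130307:eq3} reads $H_0(\partial)\xi^{1,\alpha}_{n-1}=H_1(\partial)\xi^{1,\alpha}_n$. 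Matching top components gives the first line of \eqref{20130306:eq7} directly, while matching bottom components and then multiplying through by $v^2$ and solving for $(f^{1,\alpha}_n)'$ yields the second line.

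The only conceivable obstacle is keeping the indices $n-1$ versus $n$ and the two parts distinct—in part (a) the ``new'' vector $\xi^{0,\alpha}_n$ is the one to which $H_0$ is applied, whereas in part (b) the ``new'' vector $\xi^{1,\alpha}_n$ is the one to which $H_1$ is applied—so I would write the two cases out separately rather than attempt a unified presentation. Beyond this, every step is a mechanical rearrangement.
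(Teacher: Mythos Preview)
Your proposal is correct and is exactly the approach the paper takes: the paper's proof consists of the single sentence ``It follows immediately from the definitions \eqref{20130307:eq1}--\eqref{20130307:eq2} of the operators $H_0$ and $H_1$,'' and your write-up is simply the explicit unpacking of that sentence. The computations you outline (in particular the identification of which component yields which line, and the rearrangements in the second lines of \eqref{20130306:eq6} and \eqref{20130306:eq7}) are all accurate.
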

\begin{proof}
It follows immediately from the definitions \eqref{20130307:eq1}-\eqref{20130307:eq2}
of the operators $H_0$ and $H_1$.
\end{proof}

The algebra of differential functions $\mc V$, defined by \eqref{20130307:eq5},
admits the following two differential subalgebras:
\begin{equation}\label{20130307:eq6}
\mc V^+=\mc R_2=\mb F[u,v,u',v',u'',v'',\dots]
\,\,,\,\,\,\,
\mc V^-=\mb F\big[u,\frac1v,u',v',u'',v'',\dots\big]
\,,
\end{equation}
whose intersection is the differential subalgebra
\begin{equation}\label{20130307:eq7}
\mc V^0=\mb F[u,u',v',u'',v'',\dots]\,.
\end{equation}
\begin{lemma}\label{20130307:lem4}
\begin{enumerate}[(a)]
\item
We have $\partial\mc V\cap\mc V^+=\partial\mc V^+$.
\item
We have 
$\partial\mc V\cap\mc V^-=\partial(\mb F v\oplus\mc V^-)$,
\item
For every $k\geq1$, we have
$\partial\mc V\cap\big(\mb F\oplus\frac1{v^k}\mc V^-\big)
=\partial\big(\mb F\frac1{v^{k-1}}\oplus\frac1{v^k}\mc V^-\big)$.
\end{enumerate}
\end{lemma}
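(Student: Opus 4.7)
The plan is to exploit the natural $\mb Z$-grading $\mc V=\bigoplus_{k\in\mb Z}v^k\mc V^0$, where $\mc V^0$ is the differential subalgebra \eqref{20130307:eq7} (which does not involve $v$ itself). Writing $F=\sum_{k\in\mb Z}v^k f_k$ with $f_k\in\mc V^0$, I compute
\begin{equation*}
\partial F\,=\,\sum_{k\in\mb Z}v^k\bigl((k+1)v'f_{k+1}+\partial f_k\bigr),
\end{equation*}
where $\partial$ restricts to a derivation of $\mc V^0$ since $\partial v^{(n)}=v^{(n+1)}\in\mc V^0$ for every $n\geq1$. The $\supset$ inclusions in parts (a)--(c) are then immediate checks: $\partial\mc V^+\subset\mc V^+$, $\partial\mc V^-\subset\mc V^-$ (using $\partial(1/v)=-v'/v^2$), $\partial v=v'\in\mc V^-$, and $\partial(v^{1-k})=-(k-1)v'/v^k\in\mb F\oplus\frac1{v^k}\mc V^-$.

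The key technical input for the $\subset$ direction is that \emph{$\partial\mc V^0\cap(\mb F\oplus\mb F v')=0$}, i.e.\ neither $v'$ nor any nonzero constant lies in $\partial\mc V^0$. I will prove this by relabelling $w^{(n)}:=v^{(n+1)}$, under which $\mc V^0\cong\mb F[u,u',u'',\dots]\otimes\mb F[w,w',w'',\dots]$ becomes the tensor product of two standard differential polynomial rings, with $\partial$ acting by the usual shift on each factor. The projection onto the second factor (setting every $u^{(n)}=0$) commutes with $\partial$, so an equation $v'=\partial g$ in $\mc V^0$ would descend to $w=\partial\bar g$ in $\mb F[w,w',\dots]$, contradicting the fact that $w$ has nonzero variational derivative; similarly no nonzero constant is a total derivative. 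Equivalently, under the weight grading in which $u^{(n)}$ and $w^{(n)}$ both have weight $n+1$, $\partial$ raises weight by $1$, so the weight-$0$ and weight-$1$ pieces of $\partial\mc V^0$ both vanish.

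With this input, each of (a)--(c) follows by inspecting the equations $(k+1)v'f_{k+1}+\partial f_k=0$ that the vanishing of appropriate $v^k$-components of $\partial F$ imposes. In (a), vanishing for $k\leq-1$ combined with a downward induction on the minimum index $m$ with $f_m\neq0$ gives $m v'f_m=0$ (using $f_{m-1}=0$), so $f_m=0$, contradicting $m\leq-1$; hence $F\in\mc V^+$. In (b), vanishing for $k\geq1$ together with an upward induction on the maximum nonzero index yields $f_k=0$ for $k\geq2$ and $f_1\in\mb F$, so $F\in\mb F v\oplus\mc V^-$. In (c), the same argument first gives $f_k=0$ for $k\geq2$ and $f_1\in\mb F$; the constraint that the $v^0$-component of $\partial F$ be a constant, combined with the technical input, then forces $f_1=0$ and $f_0\in\mb F$; finally the vanishing conditions for $-k<j<0$ propagate in a cascade from $j=-1$ downward, producing $f_{-1}=\cdots=f_{-(k-2)}=0$ and $f_{1-k}\in\mb F$.

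I expect the main obstacle to be keeping the bookkeeping straight in the final cascade of (c): at each of the $k-1$ steps one must invoke the key input to collapse $\partial f_j=-(j+1)v'f_{j+1}$ into $f_{j+1}=0$ and $f_j\in\mb F$, and track carefully that exactly one constant slot survives at $j=1-k$, matching the $\mb F\frac1{v^{k-1}}$ summand on the right-hand side. Conceptually the three parts share a uniform argument, but the indexing in (c) requires care.
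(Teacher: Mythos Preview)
Your approach is essentially the same as the paper's: decompose $F=\sum_j v^j f_j$ with $f_j\in\mc V^0$ and read off the $v^j$-components of $\partial F$. The paper's version is terser---for (b) and (c) it examines only the top two coefficients $c_N,c_{N-1}$, using that $c_{N-1}'+Nc_Nv'=0$ forces $c_{N-1}+Nc_Nv\in\ker\partial=\mb F$, which is impossible for $c_{N-1}\in\mc V^0$ when $Nc_N\neq0$. Your explicit isolation of the fact $\partial\mc V^0\cap(\mb F\oplus\mb Fv')=0$, and your full cascade for (c), make the same mechanism more transparent.

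There is one small gap in your cascade for (c). At $j=-1$ the equation reads $0\cdot v'f_0+\partial f_{-1}=0$: the coefficient in front of $f_0$ is zero, so nothing in the subsequent steps ever forces $f_0=0$. After the cascade you therefore have \emph{two} constants surviving, $f_0\in\mb F$ and $f_{1-k}\in\mb F$, not one, so $F\in\mb F\oplus\mb F\tfrac1{v^{k-1}}\oplus\tfrac1{v^k}\mc V^-$. This is harmless---since $\partial f_0=0$ you may replace $F$ by $F-f_0$, which lies in $\mb F\tfrac1{v^{k-1}}\oplus\tfrac1{v^k}\mc V^-$ and has the same derivative---but you should say so, because the lemma asserts an equality of \emph{images} under $\partial$, not a statement about the preimage $F$ itself.
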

\begin{proof}
Any element $f\in\mc V$ admits a unique decomposition
$f=\sum_{j=M}^Nv^jc_j$, 
where $M\leq N\in\mb Z$, $c_j\in\mc V^0$ for all $j$ and $c_M,c_N\neq0$.
Its derivative is then
$$
f'=
\sum_{j=M-1}^{N-1}(j+1)v^jc_{j+1}v'
+\sum_{j=M}^Nv^jc_j'\,.
$$
If $f'\in\mc V^+$, then $M$ must be non-negative, i.e. $f\in\mc V^+$, proving part (a).
Suppose next that $f'\in\mc V^-$. 
If $N\geq2$, then $c_N'=0$, $c_{N-1}'+Nc_Nv'=0$,
namely $0\neq c_N\in\mb F$ and $c_{N-1}+Nc_Nv\in\mb F$,
which is impossible since, by assumption, $c_{N-1}\in\mc V^0$.
If $N=1$, then $c_1'=0$, namely $c_1\in\mb F$,
and therefore $f\in\mb Fv\oplus\mc V^-$.
Finally, if $N\leq0$, then $f\in\mc V^-$.
This completes the proof of part (b).
A similar argument can be used to prove part (c).
Indeed, let $k\geq1$ and assume that $f'\in\mb F\oplus\frac1{v^k}\mc V^-$.
If $N\geq -k+2$, then $c_N'=0$, $c_{N-1}'+Nc_Nv'=0$
(when $N=0$ or $1$ we are using the fact that $\mb F\cap\partial\mc V=0$),
namely $0\neq c_N\in\mb F$ and $c_{N-1}+Nc_Nv\in\mb F$.
This is impossible since, by assumption, $c_{N-1}\in\mc V^0$.
If $N=-k+1$, then $c_{-k+1}'=0$,
namely $c_{-k+1}\in\mb F$,
and therefore $f\in\mb F\frac1{v^{k-1}}\oplus\frac1{v^k}\mc V^-$.
Finally, if $N\leq k$, then $f\in\frac1{v^k}\mc V^-$.
\end{proof}
\begin{lemma}\label{20130307:lem1b}
\begin{enumerate}[(a)]
\item
If $\epsilon=0$, 
we have $f^{0,\alpha}_n\in\frac1v \mc V^-$
and $g^{0,\alpha}_n\in\mb F\oplus\frac1v \mc V^-$
for every $n=0,\dots,N$.
\item
If $\epsilon=1$,
we have $f^{1,\alpha}_n\in\mc V^+$
and $g^{1,\alpha}_n\in v^2 \mc V^+$
for every $n=0,\dots,N$.
\end{enumerate}
\end{lemma}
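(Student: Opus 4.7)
The plan is induction on $n$, treating both parts in parallel. The base case $n=0$ is verified by direct inspection of the Casimirs listed in Proposition~\ref{20130305:prop}(a),(d): the components of $\xi^{0,0}$ and $\xi^{0,1}$ visibly lie in $\mb F \oplus \frac{1}{v}\mc V^-$ and $\frac{1}{v}\mc V^-$ respectively, and the components of $\xi^{1,0}$ and $\xi^{1,1}$ visibly lie in $\mc V^+$ and $v^2\mc V^+$ respectively.

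For the inductive step I would first record the closure properties driving the argument: the subspaces $\mc V^+$, $\frac{1}{v}\mc V^-$, and $v^2\mc V^+$ are each stable under $\partial$ and under multiplication by elements of $\mc V^0$, and since the coefficients of $Q(\partial)$ lie in $\mc V^0$ the operator $Q(\partial)$ preserves both $\mc V^+$ and $\frac{1}{v}\mc V^-$. Using the inductive hypothesis it is then a routine term-by-term check from Lemma~\ref{20130307:lem1a} that the right-hand side of each equation in \eqref{20130306:eq6} lies in $\frac{1}{v}\mc V^-$, and the right-hand side of each equation in \eqref{20130306:eq7} lies in $\mc V^+$.

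The remaining step is to pass from containment of a derivative back to containment of the primitive, and this is supplied by Lemma~\ref{20130307:lem4}. For part (b), Lemma~\ref{20130307:lem4}(a) integrates $(f^{1,\alpha}_n)' \in \mc V^+$ into $f^{1,\alpha}_n \in \mc V^+$ (the integration constant lies in $\mb F \subset \mc V^+$), and likewise $\frac{g^{1,\alpha}_n}{v^2} \in \mc V^+$, whence $g^{1,\alpha}_n \in v^2\mc V^+$. For part (a), Lemma~\ref{20130307:lem4}(c) with $k=1$, applied to $(vf^{0,\alpha}_n)' \in \frac{1}{v}\mc V^- \subset \mb F \oplus \frac{1}{v}\mc V^-$, yields $vf^{0,\alpha}_n \in \mb F \oplus \frac{1}{v}\mc V^-$, and then dividing by $v$ places $f^{0,\alpha}_n$ in $\frac{1}{v}\mc V^-$; the same lemma applied to $(g^{0,\alpha}_n)' \in \frac{1}{v^2}\mc V^- \subset \frac{1}{v}\mc V^-$ gives $g^{0,\alpha}_n \in \mb F \oplus \frac{1}{v}\mc V^-$ directly, so that the inductive step proceeds with $f^{0,\alpha}_n$ first and then $g^{0,\alpha}_n$ (which depends on $f^{0,\alpha}_n$ through the second equation in \eqref{20130306:eq6}).

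The only real care required is keeping track of the powers of $v$ appearing in each term of the right-hand sides and handling the slightly asymmetric target spaces --- the extra $\mb F$ summand appears for $g$ but not for $f$ in part (a) --- but conceptually the lemma is a formal consequence of Lemmas~\ref{20130307:lem1a} and~\ref{20130307:lem4}.
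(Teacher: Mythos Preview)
Your argument is correct and follows essentially the same route as the paper's proof: induction on $n$, with the base case read off from the explicit Casimirs, the inductive step using Lemma~\ref{20130307:lem1a} to see that the relevant derivatives land in $\frac{1}{v}\mc V^-$ (resp.\ $\mc V^+$), and Lemma~\ref{20130307:lem4} to integrate back. The only cosmetic slip is that in part~(b) one must treat $g^{1,\alpha}_n$ before $f^{1,\alpha}_n$ (since the second equation in \eqref{20130306:eq7} involves $g^{1,\alpha}_n$), mirroring the $f$-then-$g$ order you correctly flag for part~(a).
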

\begin{proof}
First, let us prove part (a) by induction on $n\geq0$.
The claim clearly holds for $n=0$ by \eqref{20130307:eq10}.
The inductive assumption, together with the first equation in \eqref{20130306:eq6},
implies that
$(vf^{0,\alpha}_n)'\in\frac1v \mc V^-$.
Therefore $f^{0,\alpha}_n\in\frac1v \mc V^-$ thanks to Lemma \ref{20130307:lem4}(c)
for $k=1$.
Furthermore, 
the second equation in \eqref{20130306:eq6} implies that
$(g^{0,\alpha}_n)'\in\frac1v\mc V^-$,
so that, by Lemma \ref{20130307:lem4}(c) with $k=1$, 
we get that $g^{0,\alpha}_n\in\mb F\oplus\frac1v\mc V^-$.
Similarly, we prove part (b) again by induction on $n\geq0$.
For $n=0$ the claim holds by \eqref{20130307:eq11}.
The first equation in \eqref{20130306:eq7} and Lemma \ref{20130307:lem4}(a)
imply that $g^{1,\alpha}_n\in v^2 \mc V^+$,
while the second equation in \eqref{20130306:eq7} and Lemma \ref{20130307:lem4}(a)
imply that $f^{1,\alpha}_n\in \mc V^+$.
\end{proof}

\begin{lemma}\label{20130307:lem2}
For every $n=0,\dots,N$ we have
\begin{equation}\label{20130307:eq23}
\tint \xi^{\epsilon,1-\alpha} \cdot H_{1-\epsilon}(\partial)\xi^{\epsilon,\alpha}_n\in C_{-1}(H_\epsilon)\,.
\end{equation}
\end{lemma}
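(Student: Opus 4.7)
The plan is to deduce this lemma from Lemma \ref{20130222:lem}(e) by first lifting the Lenard-Magri sequence $\{\xi^{\epsilon,\alpha}_n\}$ in $\mc V^2$ to a sequence of local functionals in the normal extension $\tilde{\mc V}/\partial\tilde{\mc V}$, and then rewriting $\tint \xi^{\epsilon,1-\alpha}\cdot H_{1-\epsilon}(\partial)\xi^{\epsilon,\alpha}_n$ as a Poisson bracket of two such functionals.

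\emph{Lifting step.} I would argue, by induction on $n$, that each $\xi^{\epsilon,\alpha}_n$ is closed, so that by Theorem \ref{20130222:thm2} it equals $\delta \tint h^{\epsilon,\alpha}_n$ for some $\tint h^{\epsilon,\alpha}_n \in \tilde{\mc V}/\partial\tilde{\mc V}$. The base case is Proposition \ref{20130305:prop}(b),(e), which provides $\xi^{\epsilon,\alpha}_0 = \delta \tint h^{\epsilon,\alpha}$ explicitly. For the inductive step with $n \geq 1$, I would appeal to Theorem \ref{20130222:thm1}: the two consecutive relations $H_{1-\epsilon}(\partial)\xi^{\epsilon,\alpha}_{n-2}=H_\epsilon(\partial)\xi^{\epsilon,\alpha}_{n-1}$ and $H_{1-\epsilon}(\partial)\xi^{\epsilon,\alpha}_{n-1}=H_\epsilon(\partial)\xi^{\epsilon,\alpha}_n$ (with $\xi^{\epsilon,\alpha}_{-1}=0$ when $n=1$), combined with the non-degeneracy of $H_\epsilon$ (its kernel, described in Proposition \ref{20130305:prop}(a),(d), is finite dimensional), force $\xi^{\epsilon,\alpha}_n$ to be closed.

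\emph{Reduction to Lemma \ref{20130222:lem}(e).} Once the lifts are in hand, formula \eqref{20130222:eq7} gives
$$
\tint \xi^{\epsilon,1-\alpha}\cdot H_{1-\epsilon}(\partial)\xi^{\epsilon,\alpha}_n
= \{\tint h^{\epsilon,\alpha}_n,\tint h^{\epsilon,1-\alpha}\}_{1-\epsilon}.
$$
Then I would apply Lemma \ref{20130222:lem}(e) with the identifications $H_0 \leftrightarrow H_\epsilon$, $H_1 \leftrightarrow H_{1-\epsilon}$, the augmented sequence $f_0=0$ and $f_k=\tint h^{\epsilon,\alpha}_{k-1}$ for $k=1,\dots,N+1$, and $g=\tint h^{\epsilon,1-\alpha}$. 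The required hypotheses hold: $[H_\epsilon,H_{1-\epsilon}]=0$ by Theorem \ref{20130305:thm}; $[H_{1-\epsilon},H_{1-\epsilon}]=0$ since $H_{1-\epsilon}$ is a Poisson structure; the abstract relations $[H_{1-\epsilon},f_k]=[H_\epsilon,f_{k+1}]$ for $k=0,\dots,N$ translate to the Lenard-Magri equations \eqref{20130307:eq3}, the $k=0$ case encoding $H_\epsilon(\partial)\xi^{\epsilon,\alpha}_0=0$; and $g\in C_{-1}(H_\epsilon)$ by Proposition \ref{20130305:prop}(a),(d). The conclusion $\{f_{k+1},g\}_{1-\epsilon}\in C_{-1}(H_\epsilon)$ for $k=0,\dots,N$ is exactly the statement of the lemma.

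\emph{Main difficulty.} The only delicate point is the lifting step; the remainder is a direct transcription of the abstract Lemma \ref{20130222:lem}(e) into the variational polyvector field setting. The needed exactness is available because it fits inductively into the larger argument for Theorem \ref{20130306:thm}, where the extension of the Lenard-Magri sequence and the exactness of its terms in $\tilde{\mc V}$ are built up in parallel.
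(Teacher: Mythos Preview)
Your proposal is correct and follows essentially the same route as the paper: lift the $\xi^{\epsilon,\alpha}_n$ to local functionals via Theorems \ref{20130222:thm1} and \ref{20130222:thm2} (the paper does this in one sentence, you spell out the induction), rewrite the pairing as the bracket $\{\tint h^{\epsilon,\alpha}_n,\tint h^{\epsilon,1-\alpha}\}_{1-\epsilon}$, and invoke Lemma \ref{20130222:lem}(e) with the sequence augmented by $0$ at the start (the paper writes this as $\tint h^{\epsilon,\alpha}_{-1}=0$). One small remark: your parenthetical that $H_\epsilon$ is non-degenerate ``because its kernel is finite dimensional'' reverses the implication; the non-degeneracy actually comes from the Dieudonn\'e determinant computation in the proof of Proposition \ref{20130305:prop}, but this does not affect the validity of your argument.
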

\begin{proof}
By Proposition \ref{20130305:prop}(b) and (e) and by Theorem \ref{20130222:thm1},
all elements $\xi^{\epsilon,\alpha}_n$ are closed,
and therefore, by Theorem \ref{20130222:thm2}, they are exact in $\tilde{\mc V}$,
i.e. there exist $\tint h^{\epsilon,\alpha}_n\in\tilde{\mc V}/\partial\tilde{\mc V}$
such that $\xi^{\epsilon,\alpha}_n=\delta(\tint h^{\epsilon,\alpha}_n)$ for all $n=0,\dots,N$.
In the Lie superalgebra $\mc W$ of variational polyvector fields we have
(cf. equations \eqref{20130307:eq21} and \eqref{20130307:eq22}):
$$
\tint \xi^{\epsilon,1-\alpha}\cdot H_{1-\epsilon}(\partial)\xi^{\epsilon,\alpha}_n
=[[H_{1-\epsilon},\tint h^{\epsilon,\alpha}_n],\tint h^{\epsilon,1-\alpha}]
\,,
$$
and the assumption \eqref{20130307:eq3} reads
$[H_{\epsilon},\tint h^{\epsilon,\alpha}_{n-1}]=[H_{1-\epsilon},\tint h^{\epsilon,\alpha}_n]$,
for all $n=0,\dots,N$
(where we let $\tint h^{\epsilon,\alpha}_{-1}=0$).
Condition \eqref{20130307:eq23} then
holds by Lemma \ref{20130222:lem}(e).
\end{proof}

\begin{lemma}\label{20130307:lem3}
If $\epsilon\in\{0,1\}$ and $\alpha=1$, we have, for every $n=0,\dots,N$,
\begin{equation}\label{20130307:eq24}
\tint \xi^{\epsilon,0}\cdot H_{1-\epsilon}(\partial)\xi^{\epsilon,1}_n=0\,.
\end{equation}
\end{lemma}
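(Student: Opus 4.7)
My plan is to split into the two cases $\epsilon=1$ and $\epsilon=0$, which call for rather different arguments.

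For $\epsilon = 1$, the special fact is that $H_0(\partial)\xi^{1,0} = (u',v')$, so applying $H_0$ to the Casimir $\xi^{1,0}=(1,0)$ yields exactly the $x$-translation evolutionary vector field. Using skew-adjointness of $H_0$, the identification $\xi^{1,1}_n=\delta(\tint h^{1,1}_n)$ from the proof of Lemma~\ref{20130307:lem2}, and formula~\eqref{20130307:eq21},
\[
\tint \xi^{1,0}\cdot H_0(\partial)\xi^{1,1}_n
\,=\, -\tint (u',v')\cdot \xi^{1,1}_n
\,=\, -[(u',v'),\tint h^{1,1}_n]
\,=\, -\tint (h^{1,1}_n)'
\,=\, 0,
\]
uniformly in $n$.

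For $\epsilon = 0$, translation invariance is not available, so the plan is to combine Lemma~\ref{20130307:lem2} with the $v$-adic structure from Lemma~\ref{20130307:lem1b}(a). The integrand is the second component of $H_1(\partial)\xi^{0,1}_n$, namely $\tfrac{1}{v^2}(f^{0,1}_n)' - \tfrac{1}{v^2}Q(\partial)\tfrac{g^{0,1}_n}{v^2}$. Since $\partial$ preserves each subspace $\tfrac{1}{v^k}\mc V^-$ for $k\geq 1$ and $Q(\partial)$ has coefficients in $\mc V^0$, a direct check using Lemma~\ref{20130307:lem1b}(a) shows this integrand lies in the proper subspace $\tfrac{1}{v^3}\mc V^- = \bigoplus_{j\geq 3} v^{-j}\mc V^0$. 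Meanwhile Lemma~\ref{20130307:lem2} places its integral in $C_{-1}(H_0) = \Span_{\mb F}\{\tint v,\tint h^{0,1}\}$, so the integrand equals $c_0 v + c_1\bigl(u/v - (v')^2/(2v^3)\bigr) + \partial z$ for some $c_0,c_1\in\mb F$ and $z = \sum_k v^k z_k$ with $z_k\in\mc V^0$.

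Using $\partial z = \sum_k v^k(z_k' + (k+1)v' z_{k+1})$, the $v^1$- and $v^{-1}$-coefficient comparisons become respectively $0 = c_0 + z_1' + 2v' z_2$ and $0 = c_1 u + z_{-1}'$. Assigning weight $k$ to $u^{(k)}$ and $v^{(k)}$, $\partial$ raises weight by one and $v'$ carries weight one; so $z_1'+2v'z_2$ and $z_{-1}'$ live in weight $\geq 1$, whereas $c_0$ and $c_1 u$ live in weight zero. Hence $c_0 = c_1 = 0$, as required.

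The main subtlety I anticipate is checking that the integrand really lands in $\tfrac{1}{v^3}\mc V^-$ rather than merely in $\mc V^-$: one has to combine the outer $\tfrac{1}{v^2}$ in each entry of the second row of $H_1$ with the inclusions $f^{0,1}_n\in\tfrac{1}{v}\mc V^-$ and $g^{0,1}_n\in\mb F\oplus\tfrac{1}{v}\mc V^-$ to push all $v$-exponents strictly below $-2$. Once this is in place, the coefficient comparison is mechanical.
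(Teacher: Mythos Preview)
Your proof is correct, and your $\epsilon=1$ argument is genuinely different from---and much shorter than---the paper's. The paper never notices that $H_0(\partial)\xi^{1,0}=(u',v')$ is the translation vector field; instead it expands
$\delta\!\tint \xi^{1,0}\!\cdot H_0(\partial)\xi^{1,1}_n=\alpha\xi^{1,0}+\beta\xi^{1,1}$,
rewrites this via Frechet derivatives, uses closedness of $\xi^{1,1}_n$ to pass to the non-adjoint operators, and then invokes the polynomial-degree grading on $\mc V^+$ (together with the first equation of \eqref{20130306:eq7}) to force $\alpha=\beta=0$, before a separate argument kills the residual constant $\gamma$. Your one-line computation
$\tint\xi^{1,0}\!\cdot H_0\xi^{1,1}_n=-\tint(u',v')\cdot\delta h^{1,1}_n=-\tint\partial h^{1,1}_n=0$
bypasses all of this.

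For $\epsilon=0$ the two proofs are close in spirit---both exploit the $v$-power decomposition of $\mc V^-$---but proceed differently. The paper first applies $\delta/\delta u,\ \delta/\delta v$ and uses that $\tfrac1{v^2}\mc V^-$ is preserved by variational derivatives to read off $\alpha=\beta=0$ from \eqref{20130307:eq27}, then separately argues $\gamma=0$ via Lemma~\ref{20130307:lem4}(c). You instead compare $v$-coefficients of the raw integrand and kill $c_0,c_1$ with the differential-weight grading on $\mc V^0$; this is more self-contained and avoids the extra appeal to Lemma~\ref{20130307:lem4}. One small caveat: the description of $C_{-1}(H_0)$ you quote omits the constants $\mb F\!\cdot\!\tint 1$ (which sit in $\ker\delta$), and the paper's own proof treats this constant $\gamma$ explicitly. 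Your argument handles this automatically: the $v^0$-coefficient comparison $0=\gamma+z_0'+v'z_1$ forces $\gamma=0$ by the same weight reasoning. It may be worth stating this explicitly. (Incidentally, already $\tfrac1{v^2}\mc V^-$ would suffice for your coefficient comparison, so the ``main subtlety'' you flag is a little stronger than needed.)
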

\begin{proof}
For $n=0$ the claim holds by Proposition \ref{20130305:prop2}.
For arbitrary $n\geq1$ we will prove that equation \eqref{20130307:eq24}
holds separately for $\epsilon=0$ and $\epsilon=1$.

Let us consider first the case $\epsilon=0$.
By Lemma \ref{20130307:lem2},
we have that
\begin{equation}\label{20130307:eq25}
\delta\tint \xi^{0,0}\cdot H_{1}(\partial)\xi^{0,1}_n\,\in\,\ker(H_0(\partial))\,.
\end{equation}
Recalling Proposition \ref{20130305:prop}(a),
condition \eqref{20130307:eq25} says that there exist $\alpha,\beta\in\mb F$ such that:
\begin{equation}\label{20130307:eq26}
\delta\tint \xi^{0,0}\cdot H_{1}(\partial)\xi^{0,1}_n=\alpha\xi^{0,0}+\beta\xi^{0,1}\,.
\end{equation}
In other words, recalling the definition \eqref{20130307:eq2} of $H_1$,
we have
\begin{equation}\label{20130307:eq27}
\begin{array}{l}
\displaystyle{
\frac{\delta}{\delta u}
\Big(
\frac1{v^2}(f^{0,1}_n)'-\frac1{v^2}Q(\partial)\frac{g^{0,1}_n}{v^2}
\Big)
=
\beta\frac1v
\,,} \\
\displaystyle{
\frac{\delta}{\delta v}
\Big(
\frac1{v^2}(f^{0,1}_n)'-\frac1{v^2}Q(\partial)\frac{g^{0,1}_n}{v^2}
\Big)
=
\alpha+\beta\Big(-\frac{u}{v^2}+\frac{v''}{v^3}-\frac32\frac{(v')^2}{v^4}\Big)
\,.}
\end{array}
\end{equation}
By Lemma \ref{20130307:lem1b}(a)
both elements $f^{0,1}_n$ and $g^{0,1}_n$ lie in the differential subalgebra $\mc V^-$.
Note that the space $\frac1{v^2}\mc V^-$
is preserved by both $\frac{\delta}{\delta u}$ and $\frac{\delta}{\delta v}$.
It follows that the LHS's of both equations \eqref{20130307:eq27}
lie in $\frac1{v^2}\mc V^-$.
It immediately follows that, necessarily, $\beta=0$ (looking at the first equation),
and $\alpha=0$ (looking at the second equation).
We thus have so far, by equation \eqref{20130307:eq26}, that
$$
\delta\tint \xi^{0,0}\cdot H_{1}(\partial)\xi^{0,1}_n=0
\,.
$$
Recalling that $\ker\delta=\tint \mb F$,
this means that
\begin{equation}\label{20130307:eq28}
\tint \xi^{0,0}\cdot H_{1}(\partial)\xi^{0,1}_n=\tint\gamma\,,
\end{equation}
for some $\gamma\in\mb F$,
and we need to prove that $\gamma=0$.
By writing explicitly equation \eqref{20130307:eq28}, we have that
\begin{equation}\label{20130307:eq29}
\frac1{v^2}(f^{0,1}_n)'-\frac1{v^2}Q(\partial)\frac{g^{0,1}_n}{v^2}
-\gamma\in\partial\mc V\,.
\end{equation}
Lemma \ref{20130307:lem4}(c) with $k=2$
then implies that the LHS of \eqref{20130307:eq29}
lies in $\partial(\mb F\frac1v\oplus\frac1{v^2}\mc V^-)\subset\frac1{v^2}\mc V^-$,
and therefore $\gamma=0$.

For the case $\epsilon=1$ we will use a similar argument.
By Lemma \ref{20130307:lem2},
we have
\begin{equation}\label{20130307:eq25b}
\delta\tint \xi^{1,0}\cdot H_{0}(\partial)\xi^{1,1}_n\,\in\,\ker(H_1(\partial))\,.
\end{equation}
Recalling Proposition \ref{20130305:prop}(d),
condition \eqref{20130307:eq25b} is saying that there exist $\alpha,\beta\in\mb F$ such that:
\begin{equation}\label{20130307:eq26b}
\delta\tint \xi^{1,0}\cdot H_{0}(\partial)\xi^{1,1}_n=\alpha\xi^{1,0}+\beta\xi^{1,1}\,.
\end{equation}
In other words, recalling the definition \eqref{20130307:eq1} of $H_0$,
we have
\begin{equation}\label{20130307:eq27b}
\begin{array}{l}
\displaystyle{
\frac{\delta}{\delta u}
\big(
u(f^{1,1}_n)'+v(g^{1,1}_n)'
\big)
=
\alpha+\beta(u''+4u^2)
\,,} \\
\displaystyle{
\frac{\delta}{\delta v}
\big(
u(f^{1,1}_n)'+v(g^{1,1}_n)'
\big)
=
\frac12 \beta v^2
\,.}
\end{array}
\end{equation}
For $h\in\mc V$, denote by
$D_{h,1}(\partial)=\sum_{n\in\mb Z_+}\frac{\partial h}{\partial u^{(n)}}\partial^n$,
$D_{h,2}(\partial)=\sum_{n\in\mb Z_+}\frac{\partial h}{\partial v^{(n)}}\partial^n$,
its Frechet derivatives,
and by $D_{h,1}(\partial)^*$ and $D_{h,2}(\partial)^*$ the corresponding adjoint operators.
Recalling the definition of the variational derivatives,
equations \eqref{20130307:eq27b} can be equivalently rewritten as follows:
\begin{equation}\label{20130307:eq27c}
\begin{array}{l}
\displaystyle{
(f^{1,1}_n)'-D_{f^{1,1}_n,1}^*(\partial)u'-D_{g^{1,1}_n,1}^*(\partial)v'
=
\alpha+\beta(u''+4u^2)
\,,} \\
\displaystyle{
(g^{1,1}_n)'-D_{f^{1,1}_n,2}^*(\partial)u'-D_{g^{1,1}_n,2}^*(\partial)v'
=
\frac12 \beta v^2
\,.}
\end{array}
\end{equation}
Note that $\xi^{1,1}_n$ is a closed element of $\mc V^2$,
namely it has self-adjoint Frechet derivative.
This means that
$D_{f^{1,1}_n,1}^*(\partial)=D_{f^{1,1}_n,1}(\partial)$,
$D_{g^{1,1}_n,2}^*(\partial)=D_{g^{1,1}_n,2}(\partial)$,
and $D_{f^{1,1}_n,2}^*(\partial)=D_{g^{1,1}_n,1}(\partial)$.
Hence, 
equations \eqref{20130307:eq27c} give
\begin{equation}\label{20130307:eq27d}
\begin{array}{l}
\displaystyle{
(f^{1,1}_n)'-D_{f^{1,1}_n,1}(\partial)u'-D_{f^{1,1}_n,2}(\partial)v'
=
\alpha+\beta(u''+4u^2)
\,,} \\
\displaystyle{
(g^{1,1}_n)'-D_{g^{1,1}_n,1}(\partial)u'-D_{g^{1,1}_n,2}(\partial)v'
=
\frac12 \beta v^2
\,.}
\end{array}
\end{equation}
The polynomial ring $\mc V^+=\mb F[u,v,u',v',u'',v'',\dots]$
admits the polynomial degree decomposition
$\mc V^+=\bigoplus\mc V^+[k]$,
where $\mc V^+$ consists of homogeneous polynomials of degree $k$.
For $h\in\mc V^+$, denote by $h=\sum_{k\in\mb Z_+}h[k]$
its decomposition in homogeneous components.
Note that $\mc V^+[0]=\mb F$ and $\partial\mc V\cap\mb F=0$.
It follows, 
by looking at the homogenous components of degree $0$
in both sides of the first equation of \eqref{20130307:eq27d},
that $\alpha=0$.
Furthermore, 
by looking at the homogenous components of degree $2$
in both sides of the second equation of \eqref{20130307:eq27d},
we get
\begin{equation}\label{20130307:eq27e}
(g^{1,1}_n[2])'-D_{g^{1,1}_n[2],1}(\partial)u'-D_{g^{1,1}_n[2],2}(\partial)v'
=
\frac12 \beta v^2\,.
\end{equation}
On the other hand,
by looking at the homogenous components of degree $0$
in both sides of the first equation of \eqref{20130306:eq7},
we get that $g^{1,\alpha}_n[2]\in\mb F v^2$.
But then the LHS of equation \eqref{20130307:eq27e} is equal to zero,
and therefore $\beta=0$.
We thus have so far, by equation \eqref{20130307:eq26b}, that
\begin{equation}\label{20130307:eq28b}
\tint \xi^{1,0}\cdot H_{0}(\partial)\xi^{1,1}_n=\tint\gamma\,,
\end{equation}
for some $\gamma\in\mb F$,
and we need to prove that $\gamma=0$.
By writing explicitly equation \eqref{20130307:eq28b}, we get
\begin{equation}\label{20130307:eq29b}
\gamma=u(f^{1,1}_n)'+v(g^{1,1}_n)'+r'
\,,
\end{equation}
for some $r\in\mc V$.
By Lemma \ref{20130307:lem1b} 
we have that $u(f^{1,1}_n)'+v(g^{1,1}_n)'\in\bigoplus_{k\geq1}\mc V^+[k]$.
Moreover, by Lemma \ref{20130307:lem4}(a) we can assume $r\in\mc V^+$,
and therefore $r'\in\bigoplus_{k\geq1}\mc V^+[k]$.
It follows by equation \eqref{20130307:eq29b} that $\gamma=0$.
\end{proof}

\begin{corollary}\label{20130307:cor}
If $\alpha=1$, there exists $\xi^{\epsilon,1}_{N+1}\in\mc V^2$ solving the equation
$$
H_{1-\epsilon}(\partial)\xi^{\epsilon,1}_N=H_{\epsilon}(\partial)\xi^{\epsilon,1}_{N+1}\,.
$$
\end{corollary}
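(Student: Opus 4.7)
The plan is to reduce this to a direct application of Proposition \ref{20130222:prop3}. Since $H_\epsilon$ is strongly skew-adjoint by Proposition \ref{20130305:prop}(c) and (f), the proposition tells us that the equation $H_{1-\epsilon}(\partial)\xi^{\epsilon,1}_N = H_\epsilon(\partial)\xi^{\epsilon,1}_{N+1}$ has a solution $\xi^{\epsilon,1}_{N+1}\in\mc V^2$ provided $\xi^{\epsilon,1}_N = \delta(\tint h^{\epsilon,1}_N)$ for some $\tint h^{\epsilon,1}_N$ and the bracket $\{\tint h^{\epsilon,1}_N,\tint g\}_{1-\epsilon}$ vanishes for every Casimir $\tint g\in C_{-1}(H_\epsilon)$.

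The exactness of $\xi^{\epsilon,1}_N$ has already been arranged in the proof of Lemma \ref{20130307:lem2}: combining Theorem \ref{20130222:thm1} with Theorem \ref{20130222:thm2} over the normal extension $\tilde{\mc V}=\mc V[\log v]$ yields $\tint h^{\epsilon,1}_N\in\tilde{\mc V}/\partial\tilde{\mc V}$ with $\delta(\tint h^{\epsilon,1}_N)=\xi^{\epsilon,1}_N$. So the first hypothesis is for free.

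For the Casimir hypothesis, Proposition \ref{20130305:prop}(a),(b),(d),(e) shows that $C_{-1}(H_\epsilon)$ is two-dimensional, spanned by $\tint h^{\epsilon,0}$ and $\tint h^{\epsilon,1}$, so it is enough to test the bracket $\{\tint h^{\epsilon,1}_N,\cdot\,\}_{1-\epsilon}$ against these two generators. The bracket against $\tint h^{\epsilon,0}$ is precisely $\tint\xi^{\epsilon,0}\cdot H_{1-\epsilon}(\partial)\xi^{\epsilon,1}_N$, which vanishes by Lemma \ref{20130307:lem3}. The bracket against $\tint h^{\epsilon,1}$ vanishes because $\tint h^{\epsilon,1}=\tint h^{\epsilon,1}_0$ is the starting element of the Lenard-Magri sequence $\{\tint h^{\epsilon,1}_n\}_{n=0}^N$, which by Lemma \ref{20130222:lem}(a) (applied with $H_\epsilon$ and $H_{1-\epsilon}$ in the roles of $H_0$ and $H_1$, and using the trivial reindexing that incorporates the Casimir condition $[H_\epsilon,\tint h^{\epsilon,1}_0]=0$) is in involution under both Poisson brackets.

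No serious obstacle remains: the two genuinely non-trivial checks, namely that the operators $H_\epsilon$ are strongly skew-adjoint and that $\{\tint\xi^{\epsilon,0},\xi^{\epsilon,1}_n\}$ vanishes rather than merely lying in the Casimir space, have already been handled in Proposition \ref{20130305:prop} and in Lemma \ref{20130307:lem3} (the latter being where all the weight of the section concentrates, via the subalgebra decomposition $\mc V^+,\mc V^-,\mc V^0$ of Lemma \ref{20130307:lem4} and the polynomial-grading argument in the $\epsilon=1$ case). The present corollary is therefore just the final assembly step: invoke Proposition \ref{20130222:prop3} to pass from ``orthogonal to the Casimirs of $H_\epsilon$'' to ``lies in the image of $H_\epsilon(\partial)$'' and extract $\xi^{\epsilon,1}_{N+1}$.
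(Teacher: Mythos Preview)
Your proof is correct and follows essentially the same route as the paper's. The paper argues directly that $H_{1-\epsilon}(\partial)\xi^{\epsilon,1}_N$ is orthogonal to both basis elements $\xi^{\epsilon,0}$ and $\xi^{\epsilon,1}$ of $\ker H_\epsilon$ (the first via Lemma~\ref{20130307:lem3}, the second via the Lemma~\ref{20130222:lem}(a)-style induction) and then invokes strong skew-adjointness; you package that last step as an appeal to Proposition~\ref{20130222:prop3}, which is exactly what that proposition was set up for.
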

\begin{proof}
By the usual inductive argument, based on the recursive relations \eqref{20130307:eq3}
and the fact that $H_0$ and $H_1$ are skew-adjoint
(as in the proof of Lemma \ref{20130222:lem}(a)),
we know that $H_{1-\epsilon}(\partial)\xi^{\epsilon,1}_N\perp \xi^{\epsilon,1}_0$.
Moreover, 
Lemma \ref{20130307:lem3} says that 
$H_{1-\epsilon}(\partial)\xi^{\epsilon,1}_N\perp\xi^{\epsilon,0}_0$.
Therefore, 
$H_{1-\epsilon}(\partial)\xi^{\epsilon,1}_N\perp\ker(H_\epsilon)$,
which, by the strong skew-adjointness property of $H_\epsilon$ 
(cf. Proposition \ref{20130305:prop}(c) and (f)),
implies that $H_{1-\epsilon}(\partial)\xi^{\epsilon,1}_N$ lies in the image of $H_\epsilon(\partial)$,
proving the claim.
\end{proof}

\begin{proof}[Proof of Theorem \ref{20130306:thm}]
For $\epsilon\in\{0,1\}$, by Corollary \ref{20130307:cor} there exists
an infinite sequence $\{\xi^{\epsilon,1}_n\}_{n\in\mb Z_+}$
starting with $\xi^{\epsilon,1}_0=\xi^{\epsilon,1}$
and satisfying the Lenard-Magri recursive relations
$H_{1-\epsilon}(\partial)\xi^{\epsilon,1}_{n-1}=H_{\epsilon}(\partial)\xi^{\epsilon,1}_{n}$
for all $n\in\mb Z_+$ (where $\xi^{\epsilon,1}_{-1}=0$).
Next,
we want to prove that the statement of Corollary \ref{20130307:cor}
also holds for $\alpha=0$.
Let
$\{\xi^{\epsilon,0}_n\}_{n=0}^N$
be a finite sequence starting with $\xi^{\epsilon,0}_0=\xi^{\epsilon,0}$
and satisfying the Lenard-Magri recursive relations
$H_{1-\epsilon}(\partial)\xi^{\epsilon,0}_{n-1}=H_{\epsilon}(\partial)\xi^{\epsilon,0}_{n}$
for all $n=0,\dots,N$ (where $\xi^{\epsilon,\alpha}_{-1}=0$).
By the usual inductive argument
we know that $H_{1-\epsilon}(\partial)\xi^{\epsilon,0}_N\perp\xi^{\epsilon,0}_0$.
Moreover, we have, by the Lenard-Magri relations and skew-adjointness of $H_0$ and $H_1$,
$$
\begin{array}{l}
\displaystyle{
\vphantom{\Big(}
\tint \xi^{\epsilon,1}_0\cdot H_{1-\epsilon}(\partial)\xi^{\epsilon,0}_N
= -\tint \xi^{\epsilon,0}_N\cdot H_{1-\epsilon}(\partial) \xi^{\epsilon,1}_0
= -\tint \xi^{\epsilon,0}_N\cdot H_{\epsilon}(\partial) \xi^{\epsilon,1}_1
} \\
\displaystyle{
\vphantom{\Big(}
= \tint \xi^{\epsilon,1}_1\!\cdot\! H_{\epsilon}(\partial) \xi^{\epsilon,0}_N
= \tint \xi^{\epsilon,1}_1\!\cdot\! H_{1-\epsilon}(\partial) \xi^{\epsilon,0}_{N-1}
=\,\dots\,
= \tint \xi^{\epsilon,1}_{N+1}\!\cdot\! H_{1-\epsilon}(\partial) \xi^{\epsilon,0}_{-1}=0
.
} 
\end{array}
$$
Hence, 
$H_{1-\epsilon}(\partial)\xi^{\epsilon,0}_{n-1}\perp\ker(H_\epsilon)$,
and therefore, by the same argument as in the proof of Corollary \ref{20130307:cor},
there exists
$\xi^{\epsilon,0}_{N+1}\in\mc V^2$ solving
the equation $H_{1-\epsilon}(\partial)\xi^{\epsilon,0}_N=H_{\epsilon}(\partial)\xi^{\epsilon,0}_{N+1}$.
Therefore, the given finite sequence can be extended
to an infinite sequence $\{\xi^{\epsilon,0}_n\}_{n\in\mb Z_+}$
satisfying the Lenard-Magri recursive relations.

So far, for each $\epsilon,\alpha\in\{0,1\}$, we have an infinite sequence
$\{\xi^{\epsilon,\alpha}_n\}_{n\in\mb Z_+}$
satisfying the Lenard-Magri recursive relations \eqref{20130307:eq3}.
By Theorems \ref{20130222:thm1} and \ref{20130222:thm2}
we know that all the elements of these sequences are exact in $\tilde{\mc V}$,
i.e. there are elements $\tint h^{\epsilon,\alpha}_n\in\tilde{\mc V}/\partial\tilde{\mc V}$
such that $\delta h^{\epsilon,\alpha}_n=\xi^{\epsilon,\alpha}_n$
for all $\epsilon,\alpha\in\{0,1\},\,n\in\mb Z_+$.
and the relations \eqref{20130307:eq3} on the elements $\xi^{\epsilon,\alpha}_n$'s
translate to the equations \eqref{20130306:eq5} on the elements $\tint h^{\epsilon,\alpha}_n$'s.
Part (a) of the theorem is then proved, and part (b) is an immediate consequence 
of Lemma \ref{20130222:lem}(a), (b) and (c).

We are left to prove part (c).
By looking at the recursive equations \eqref{20130306:eq6}
it is not hard to compute the differential orders $|f^{\epsilon,\alpha}_n|$
and $|g^{\epsilon,\alpha}_n|$ of all the entries of each element of the four sequences.
We have for every $n\geq1$:
$$
\begin{array}{l}
\displaystyle{
\vphantom{\Big(}
|f^{0,0}_n|=6n-2\,,\,\,
|g^{0,0}_n|=6n\,,\,\,
|f^{0,1}_n|=6n\,,\,\,
|g^{0,1}_n|=6n+2\,,
} \\
\displaystyle{
\vphantom{\Big(}
|f^{1,0}_n|=6n-2\,,\,\,
|g^{1,0}_n|=6(n-1)\,,\,\,
|f^{1,1}_n|=6n+2\,,\,\,
|g^{1,1}_n|=6n-2\,.
}
\end{array}
$$
Hence, the differential orders of the higher symmetries are
$|P^{0,0}_n|=6n+5,\,|P^{0,1}_n|=6n+7,\,|P^{1,0}_n|=6n+1,\,|P^{1,1}_n|=6n+5$.
Claim (c) follows.
\end{proof}

\begin{remark}
Since the algebra of differential polynomials $\mc V^+$ is normal,
it follows from Lemma \ref{20130307:lem1b}(b)
that all integrals of motion $\tint h^{1,\alpha}_n$ lie in $\mc V^+/\partial\mc V^+$.
\end{remark}

\begin{remark}
From equation \eqref{20130306:eq6} and Lemma \ref{20130307:lem4}(c)
it is easy to see that, for every $\alpha\in\{0,1\}$ and $n\geq0$, we have
$f^{0,\alpha}_n\in\mb F\frac1v\oplus\frac1{v^2}\mc V^-$,
$g^{0,\alpha}_n\in\frac1{v^2}\mc V^-$.
It follows from the arguments in the proof of \cite[Thm.3.2]{BDSK09}
that all integrals of motion $\tint h^{0,\alpha}_n$ lie in $\mc V^-/\partial\mc V^-$.
\end{remark}


\end{document}